\newcommand{\dd}{\mathinner{.\,.}}
\newcommand{\ceil}[1]{\lceil #1 \rceil}
\newcommand{\sub}{\subseteq}
\newcommand{\sm}{\setminus}
\newcommand{\dash}{\text{-}}
\newcommand{\Oh}{\mathcal{O}}
\newcommand{\Ohtilde}{\tilde{\Oh}}
\newcommand{\eps}{\varepsilon}
\newtheorem{theorem}{Theorem}[section]
\newtheorem{corollary}[theorem]{Corollary}
\newtheorem{proposition}[theorem]{Proposition}
\newtheorem{lemma}[theorem]{Lemma}
\newtheorem{fact}[theorem]{Fact}
\newtheorem{observation}[theorem]{Observation}
\newtheorem{problem}[theorem]{Problem}
\theoremstyle{definition}
\newtheorem{definition}[theorem]{Definition}
\crefname{problem}{Problem}{Problems}
\theoremstyle{remark}
\newtheorem{remark}[theorem]{Remark}
\newcommand{\R}{\mathbb{R}}
\newcommand{\Zz}{\mathbb{Z}_{\ge 0}}
\newcommand{\laned}{\mathsf{Lan\text{-}Ed}}
\newcommand{\commentout}[1]{}
\newcommand{\calL}{{\mathcal L}}
\newcommand{\calF}{{\mathcal F}}
\newcommand{\calP}{{\mathcal P}}
\newcommand{\calN}{{\mathcal N}}
\newcommand{\dist}{\mathrm{d}}
\title{Language Edit Distance and Scored Parsing: Faster Algorithms \& Connection to Fundamental Graph Problems}
\author{Tomasz Kociumaka\thanks{Institute for Computer Science, Artificial Intelligence and Technology}, Barna Saha\thanks{University of California San Diego, This note replaces \cite{saha2015language}.}}
\date{}
\begin{document}

\maketitle

\newcommand{\G}{\mathcal{G}}
\renewcommand{\S}{\mathcal{S}}
\renewcommand{\P}{\mathcal{P}}
\newcommand{\Rz}{\mathbb{R}_{\ge 0}}
\newcommand{\Rc}{\overline{\mathbb{R}}_{\ge 0}}
\newcommand{\der}[1]{\stackrel{#1}{\leadsto}}

\begin{abstract}
    Given a context free language $\mathcal{L(G)}$ over alphabet $\Sigma$ and a string $s \in \Sigma^*$, {\em the language edit distance}  problem seeks the minimum number of edits (insertions, deletions and substitutions) required to convert $s$ into a valid member of $\mathcal{L(G)}$. The well-known dynamic programming algorithm solves this problem in $O(n^3)$ time (ignoring grammar size) where $n$ is the string length [Aho, Peterson 1972, Myers 1985]. Despite its numerous applications, to date there exists no algorithm that computes exact or approximate language edit distance problem in true subcubic time.

In this paper we give the first such algorithm that approximates language edit distance in subcubic time. For any arbitrary $\epsilon > 0$, our algorithm runs in $\tilde{O}(\frac{n^{2.491}}{\epsilon^2})$ time and returns an estimate within a multiplicative approximation factor of $(1+\epsilon)$. Moreover, an additive $\varepsilon n$ approximation can be computed in $\Oh(\frac{n^2}{\epsilon^{0.825}})$ time.  

To complement our upper bound result, we show that exact computation of language edit distance with insertion-only edits in truly subcubic time will imply a truly subcubic algorithm for all-pairs shortest paths which is a long-standing open question in computer science. 
\end{abstract}
\section{Introduction}
Given a model for data semantics and structures, estimating how well a dataset fits the model is a core question in large-scale data analysis. Formal languages (e.g., regular language, context free language) provide a generic framework for data modeling. The deviation from a model is measured by the least changes required in data to perfectly fit the model.  Aho and Peterson studied this basic question more than forty years back to design error-correcting parsers for programming languages. Given a context free grammar (CFG) $\calL(G)$  over alphabet $\Sigma$ and a string $s \in \Sigma^*$, they proposed {\em the language edit distance} problem, which determines the fewest number of edits (insertions, deletions and substitutions) along with an edit script to convert $s$ into a valid member of $\calL(G)$. Due to its fundamental nature, the language edit distance problem has found many applications in compiler optimization 
(error-correcting parser \cite{ap72,m95}), data mining and data management (anomaly detection, mining and repairing data quality problems \cite{kssy:13,gm:2013,p82}), computational biology (biological structure prediction \cite{ps:101,vgf:13}), machine learning (learning topic and behavioral models  \cite{Johnson:2010,Moore:2002,video3}), and signal processing (video and speech analysis \cite{speech}).

For the language edit distance problem, the proposed algorithm by Aho and Peterson has a running time of $O(|G|^2n^3)$ \cite{ap72}. The dependency on grammar size in run time was later improved by Myers to $O(|G|n^3)$ \cite{m95}. Naturally, the cubic time-complexity on string length is a major bottleneck for most applications. Except minor polylogarithmic improvements over $O(n^3)$ \cite{vgf:13,rn14, zakov2010}, when the paper was written, no truly subcubic algorithms for the language edit distance problem was known. Since then, significant progress has been made in this area (notably see \cite{BGSW2019}).

In this paper, we make several contributions.

\noindent {\bf Upper Bound.}
 We give the first truly subcubic algorithm to approximate language edit distance for arbitrary context free languages.  Our algorithm runs in $\Oh(|G|\frac{n^{2.491}}{\varepsilon^2})$ time and computes the language edit distance within a multiplicative approximation factor of $(1+\varepsilon)$ for any $\varepsilon > 0$. Moreover, we get an $\varepsilon n$-additive approximation with a running time of $\tilde{\Oh}(|G|\frac{n^2}{\varepsilon^{0.825}})$. 

 The above result is obtained by reducing language edit distance problem to computing $(min,+)$ matrix products. Moreover, we observe that the underlying matrices have multiple structural properties which help to speed up the computation.
 
 In particular, the above bounds hold for a more generic problem of {\em scored parsing} with bounded score. In a scored parsing problem, each production in a grammar is associated with a score. The score of a parsing tree is the sum of scores of the productions used at every node of the parsing tree. The goal here is to compute a parsing that minimizes the total score of the parse tree. Aho a Peterson showed that language edit distance computation can be reduced to the scored parsing problem where scores are either $0$ or $1$.

We note that by a lower bound result of Lee \cite{l97}, it is known that a faster context free grammar parsing (distinguishing between $0$ and nonzero edit distance) leads to a faster algorithm for boolean matrix multiplication. Therefore, obtaining any multiplicative approximation factor for language edit distance in $o(n^\omega)$ time is unlikely where $\omega$ is the exponent of fast matrix multiplication.


\noindent {\bf Lower Bound.} In a pursuit to explain the difficulty in obtaining exact subcubic algorithms, {\it we show that a subcubic algorithm for Stochastic context free grammar parsing parsing or computing language edit distance (with only insertion) will culminate in a breakthrough result for several fundamental graph problems.} In particular, we show  {\it any subcubic algorithm for the SCFG parsing  leads to a subcubic algorithm for the all-pairs shortest paths problem (APSP).}  Similarly, if language edit distance where only insertion is allowed as edit operation has a subcubic algorithm, so does APSP. This establishes surprising connection to these problems with a fundamental graph problem for which obtaining a subcubic algorithm is a long-standing open question.

Our lower bound results build upon a construction given by Lee \cite{l97} who showed a faster algorithm for CFG parsing implies a faster algorithm for boolean matrix multiplication.  For SCFG parsing, by suitably modifying his construction, we show a subcubic SCFG parser implies a subcubic $(\min, \times)$ matrix product computation \footnote{ $(\min, \times)$ matrix product $C$ of two matrices $A$ and $B$ of suitable dimensions is defined as $C[i,j]=\min_{k} A[i,k] \times B[k,j]$.}. Next, we prove a subcubic algorithm for $(\min, \times)$ matrix product leads to a subcubic algorithm for negative triangle detection in weighted graphs. Negative triangle detection is one of the many problems known to be subcubic equivalent with all-pairs shortest path problem \cite{williams2010}. Our second reduction interestingly also shows computing $(\min, +)$ product of matrices with real weights bounded by $\log{n}$ in subcubic time is unlikely to exist. In contrast, $(\min, +)$ product of matrices with integer weights bounded by $\log{n}$ can be done fast in $O(n^\omega \log{n})$ time. Our lower bound result for language edit distance also uses similar construction and builds upon it to additionally handle edit distance.

By subcubic equivalence \cite{williams2010,abboud:14}, a subcubic algorithm for language edit distance or SCFG parsing implies a subcubic algorithm for a large number of graph problems, e.g. detecting {\em minimum weight triangle}, {\em minimum weight cycle}, checking {\em metricity}, finding {\em second shortest path}, {\em replacement path}, {\em radius problem}.

\subsection{Related Works}
\label{section:related}

The language edit distance problem is a significant generalization of the widely-studied {\em string edit distance problem} where two strings need to be matched with minimum number of edits. The string edit distance problem can be exactly computed in quadratic time. Despite many efforts, a sub-quadratic exact algorithm for string edit distance does not exist. A recent result by Backurs and Indyk explains this difficulty by showing a sub-quadratic algorithm for string edit distance implies sub-exponential algorithm for satisfiability \cite{DBLP:journals/corr/BackursI14}. Our lower bound results are in a similar spirit which connects subcubic algorithm for language edit distance, and SCFG parsing to graph problems for which obtaining exact subcubic algorithms are long-standing open questions. For approximate string edit distance computation, there is a series of works that tried to lower the running time to near-linear  \cite{Sahinalp:1996,Cole:2002,lms:98,bjkk:04,bfc:06,ak:09,ako:10}. 

Language recognition and parsing problems have been studied for variety of languages under different models for decades \cite{mmn:stoc10,cckm:focs10,kls:mfcs11,akns:siam01,prr:rand03}. The works of \cite{mmn:stoc10,cckm:focs10,kls:mfcs11} study the
complexity of recognizing Dyck language in space-restricted streaming model. Alon, Krivelevich, Newman and Szegedy consider testing
regular language and Dyck language recognition problem using sub-linear queries \cite{akns:siam01}, followed by improved bounds in works of \cite{prr:rand03}.
The early works of $O(n^3)$ time algorithm for parsing context free grammars (such as the Cocke-Younger-Kasami algorithm (CYK or CKY algorihtm)) was improved by an elegant work of Valiant who obtained the first subcubic algorithm for context free grammar parsing \cite{Val1975}. For stochastic grammar parsing and language edit distance computation, till date there does not exist any true subcubic algorithm, except for minor polylogarithmic improvements in the running time \cite{vgf:13,rn14, zakov2010}.

\section{Preliminaries}
A \emph{(context-free) grammar} is a tuple $\G = (\S,\Sigma,\P,S)$, where
$\S$ is a set of \emph{symbols},
$\Sigma\sub \S$ is a set of \emph{terminals},
$\P\sub (\S\sm \Sigma)\times \S^*$ is a set of \emph{productions},
and $S\in\S\sm \Sigma$ is the \emph{starting symbol}.
Traditionally, $A\to R$ denotes a production $(A,R)\in \P$.
The \emph{size} of a grammar $\G$ is defined as $|\G| = |\S|+\sum_{(A,R)\in \P} |R|$.

For $X,Y\in \S^*$, we say that $Y$ can be \emph{derived} from $X$, denoted $X\leadsto Y$,
if $Y$ can be obtained from $X$ by repeatedly selecting a production $A\to R$ from $\P$ and replacing a (single occurrence of) $A$ with $R$.
The \emph{language} of a symbol $A\in \S$ is defined with $L(A)=\{X\in \Sigma^* : A \leadsto X\}$,
and the language of the grammar $\G$ is defined with $L(\G)=L(S)$.

\begin{problem}[Parsing]\label{prb:p}
    Given a grammar $\G=(\S,\Sigma,\P,S)$ of size $g$ and a string $X\in \Sigma^n$, decide if $X\in L(\G)$.
\end{problem}

\begin{theorem}[Valiant~\cite{Val1975}]\label{thm:valiant}
    The parsing problem can be solved in $\Oh(g^{\Oh(1)}n^{\omega})$ time.
\end{theorem}

In a \emph{scored grammar}, each production is associated with a non-negative real score;
formally, $\P \sub (\S \sm \Sigma) \times \S^* \times \Rz$.
Moreover, we write $X\der{s}Y$ if there is a finite sequence of intermediate strings $X=Z_1,\ldots,Z_{k}=Y$
and a sequence of productions $A_i\stackrel{s_i}{\to} R_i$, with $i\in [1\dd k)$,
such that $s=\sum_{i=1}^{k-1}s_i$ and $Z_{i+1}$ can be obtained from $Z_i$ by replacing a single occurrence of $A_i$ with $R_i$.
For a symbol $A\in \S$ and a string $X\in \Sigma^*$, we define the \emph{score of parsing $X$ into $A$}
as $s_A(X) = \inf\{s\in  \Rz : A \der{s} X\}$.
We assume that  $s_A : \Sigma^*\to \Rc$, where $\Rc = \Rz\cup\{\infty\}=\Rz\cup\{\inf \emptyset\}$.
Furthermore, we set $s_\G(X)=s_S(X)$.

The language $L(A)$ of a symbol $A\in \S$ can be defined by discarding the production scores  
and interpreting $\G$ as a standard context-free grammar or, equivalently, with
 $L(A) = \{X\in \Sigma^* : s_A(X) < \infty\}$. Similarly, $L(\G)=\{X\in \Sigma^* : s_\G(X) < \infty\}$.
Conversely, a standard grammar can be interpreted as a scored grammar by associating cost $0$ to each production.
Consequently, the scored parsing problem and the approximate scored parsing problem, defined below, both generalize \cref{prb:p}.

\begin{problem}[Scored Parsing]\label{prb:sp}
Given a scored grammar $\G=(\S,\Sigma,\P,S)$ of size $g$ and a string $X\in \Sigma^n$, compute the score $s_\G(X)$.
\end{problem}

\begin{fact}[Folklore; see~\cite{Coc1969,Kas1966,Sak1961,You1967}]\label{fct:cyk}
    The scored parsing problem can be solved in $\Oh(g^{\Oh(1)}n^{3})$ time.
\end{fact}

\begin{problem}[Approximate Scored Parsing]\label{prb:asp}
    Given a scored grammar $\G=(\S,\Sigma,\P,S)$ of size $g$, a string $X\in \Sigma^n$, and a parameter $\eps\in (0,1)$,
    compute $\tilde{s}_\G(X)\in \Rc$ such that $s_\G(X)\le \tilde{s}_\G(X)\le (1+\eps)s_\G(X)$.
\end{problem}

\begin{theorem}\label{thm:main}
The approximate scored parsing problem can be solved in $\Ohtilde(\eps^{-2}g^{\Oh(1)}n^{2.491})$ time.
\end{theorem}
Note: If $\omega = 2$, then the running time is $\Ohtilde(\eps^{-4/3}n^{7/3})$.
In general, we have some trade-off between the exponent at $\eps$ and the exponent at $n$;
in the theorem, the exponent at $n$ is optimized.

\subsection{CNF Grammars}
A  grammar $\G$ is a \emph{CNF} grammar (in Chomsky Normal Form) if each production $A\to R$ satisfies the following conditions:
\begin{enumerate}[label=(\alph*)]
    \item the starting symbol $S$ does not occur in $R$,
    \item if $A\ne S$, then $|R|=2$,
    \item if $A = S$, then $|R|\le 1$.
\end{enumerate}
Moreover, a grammar $\G$ is an \emph{almost-CNF} grammar if each production $A\to R$ satisfies $|R|\le 2$.

We say that two scored grammars $\G,\G'$ with the same terminals $\Sigma$ are \emph{equivalent} if $s_\G(X)=s_{\G'}(X)$ holds for all $X\in \Sigma^*$.
\begin{lemma}[\cite{BGSW2019}]\label{lem:almost}
    Given an almost-CNF scored grammar $\G$, an equivalent CNF scored grammar $\G'$ of size $|\G'|=|\G|^{\Oh(1)}$
    can be constructed in $\Oh(|\G|^{\Oh(1)})$ time.
\end{lemma}

\begin{corollary}\label{cor:cnf}
    Given a scored grammar $\G$,  an equivalent CNF grammar $\G'$ of size $|\G'|=|\G|^{\Oh(1)}$
    can be constructed in $\Oh(|\G|^{\Oh(1)})$ time.
\end{corollary}
\begin{proof}
    We discard every production $A\stackrel{s}{\to} B_1\cdots B_k$ with $k\ge 3$, 
    adding $k-2$ new symbols $C_2,\ldots,C_{k-1}$ and $k-1$ new productions  $C_2 \stackrel{0}{\to} B_1B_2$,
    $C_i \stackrel{0}{\to} C_{i-1}B_i$ for $i\in [3\dd k-1]$,
    and $A\stackrel{s}{\to} C_{k-1}B_k$. 
    It is easy to see that this process results in an equivalent almost-CNF scored grammar of size $\Oh(|\G|)$.
    We then apply \cref{lem:almost} to derive an equivalent CNF scored grammar.
\end{proof}

\section{CYK Parser}
In this section, we introduce some useful notation and, as a warm-up, derive a proof \cref{fct:cyk}.
Let us fix an instance of the scored parsing problem involving a CNF grammar $\G$.

For every symbol $A\in \S$, we define a matrix $M_A\in \Rc^{[0\dd n]\times [0\dd n]}$ 
so that $M_A[i,j]=s_A(X[i\dd j))$ if $0\le i\le j\le n$ and $M_A[i,j]=\infty$ if $0\le j < i \le n$.

The values $M_A[i,j]$ can be computed recursively using the following observation:

\begin{observation}\label{obs:cyk}
For a CNF scored grammar $\G$ and a string $X\in \Sigma^n$,
let $i,j\in [0\dd n]$ and $A\in \S$. 
If $A\in \Sigma$, then
\[ M_A[i,j] = \begin{cases}
0 & \text{if }j=i+1\text{ and }X[i]=A,\\
\infty & \text{otherwise.}
\end{cases}\]
If $A\in \S\sm (\Sigma\cup\{S\})$, then
\[ M_A[i,j] = \min_{A\stackrel{s}{\to}BC} \min_{k=i+1}^{j-1} (s + M_B[i,k]+M_C[k,j]).
\]
If $A=S$, then
\[M_A[i,j] = \begin{cases}
    \min_{S\stackrel{s}{\to}\eps} s & \text{if }i=j,\\
    \min_{S\stackrel{s}{\to}B} (s+M_B[i,j])&\text{otherwise.}
\end{cases}\]
\end{observation}
Consequently, if $\G$ is a CNF scored grammar, then the scored parsing problem can be solved in $\Oh(gn^3)$ time.
Combined with \cref{cor:cnf}, this immediately yields \cref{fct:cyk}.

\section{Valiant's Parser}
In this section, we present the parsing algorithm of Valiant~\cite{Val1975} in a modern interpretation by Okhotin~\cite{Okh2010}. While we state it for a scored grammar, we note that it does not yield any improvements upon \cref{fct:cyk}. However, introducing appropriate modifications, we derive a fast approximate scored parsing algorithm. 

\subsection{Notation}
We assume for simplicity that $n+1$ is a power of $2$; if this is not the case, we extend $X$ with arbitrary characters.
We can perform this transformation without loss of generality because the parser computes the whole score matrix $M_S$ so, in particular, we can retrieve the score of the original string.

For integers $\ell \in [0\dd \log (n+1)]$ and $p\in [0\dd \frac{n+1}{2^\ell})$,
we say that $[p\cdot 2^\ell \dd (p+1)\cdot 2^\ell)$ is a \emph{level-$\ell$ (dyadic) interval}.
For two intervals $I,J$, we write $I\prec J$ if $\max I < \min J$.
For a level-$\ell$ interval $I$ with $\ell>0$,
we denote by $I_L$ and $I_R$ the two level-$(\ell-1)$ intervals satisfying
$I = I_L\cup I_R$ and $I_L\prec I_R$.

For two sets $I,J$, let  $\Rc^{I\times J}$ be the family of matrices indexed by $I\times J$ with values in $\Rc$.
For a matrix $M\in \Rc^{[0\dd n]\times [0\dd n]}$ and sets $I,J\sub [0\dd n]$,
let  $M[I,J]\in \Rc^{I\times J}$ be the submatrix indexed by $I\times J$.

Given matrices $M\in\Rc^{I\times K}$ and $N\in \Rc^{K\times J}$,
we define the $(\min,+)$ product matrix $M\star N\in \Rc^{I\times J}$ so that 
$(M\star N)[i,j] = \min_{k\in K} (M[i,k]+N[k,j])$ for all $(i,j)\in I\times J$.
Moreover, for two matrices $M,N\in \Rc^{I\times J}$, we define $\min(M,N)$ to be the point-wise minimum of $M$ and $N$,
whereas for a matrix $M\in \Rc^{I\times J}$ and $s\in \Rc$, we define $s+M$ to be the matrix obtained from $M$ by adding $s$ to each entry.

\newcommand{\compute}{\mathsf{compute}}
\newcommand{\complete}{\mathsf{complete}}
\newcommand{\update}{\mathsf{update}}

\subsection{Implementation}
For each symbol $A\in \S$, the parser maintains a matrix $V_A\in \Rc^{[0\dd n]\times [0\dd n]}$,
aiming to eventually set $V_A := M_A$. 
This goal is achieved using three procedures specified below and implemented in \cref{alg:valiant}:
\begin{description}[leftmargin=2.7cm,style=sameline]
\item[$\compute(I)$:] Given a level-$\ell$ interval $I$, set $V_{A}[I,I]:=M_A[I,I]$ for all $A\in \S$.
\item[$\complete(I,J)$:] Given two level-$\ell$ intervals $I,J$ such that $I  \prec J$, set $V_{A}[I,J]:=M_A[I,J]$ for all $A\in \S$, assuming that the following holds for all $A\in \S$:
\begin{itemize}
   \item $V_A[I,I] = M_A[I,I]$,
   \item $V_A[J,J]=M_A[J,J]$,
   \item $V_A[I,J] = \min_{A\stackrel{s}{\to}BC} (s + M_B[I,K]\star M_C[K,J])$, where $K=(\max I \dd \min J)$.
\end{itemize} 
\item[$\update(I,K,J)$:] Given three level-$\ell$ intervals $I,K,J$ such that $I\prec K \prec J$
and $I\cup K$ or $K\cup J$ forms a level-$(\ell+1)$-interval,
set  $V_A[I,J] = \min(V_A[I,J], \min_{A\stackrel{s}{\to}BC} (s + V_B[I,K]\star V_C[K,J]))$
for all $A\in \S$.
\end{description}
In particular, note that executing $\compute([0\dd n])$ results in setting $V_A := M_A$ for all $A\in \S$.

\begin{algorithm}[H]
    \caption{Valiant's parser as interpreted by Okhotin.}\label{alg:valiant}
    \SetKwProg{Fn}{Procedure}{:}{}
    \Fn{$\compute(I)$}{
    \If{$|I|=1$}{
        Let $I = \{i\}$\;
        \lForEach{$A\in \S$}{$V_A[i,i]=\infty$}\label{ln:iinf}
        \lForEach{$S\stackrel{s}{\to} \eps$}{$V_S[i,i] = \min(V_S[i,i],s)$}\label{ln:iS}
    }\Else{
        $\compute(I_L)$\;\label{ln:L}
        $\compute(I_R)$\;\label{ln:R}
        \ForEach{$A\in \S$}{
            $V_A[I_R,I_L]=[\infty]^{I_R\times I_L}$\;\label{ln:iRL}
            $V_A[I_L,I_R]=[\infty]^{I_L\times I_R}$\;\label{ln:iLR}
        }
        $\complete(I_L,I_R)$\;\label{ln:complete}
    }
    }\BlankLine

    \Fn{$\complete(I,J)$}{
    \If{$|I|=|J|=1$}{
        Let $I = \{i\}$ and $J=\{j\}$\;
        \lIf{$j=i+1$}{$V_{X[i]}[i,j]=0$}\label{ln:zero}
        \lForEach{$S\stackrel{s}{\to} B$}{$V_S[i,j] = \min(V_S[i,j],s+V_B[i,j])$}\label{ln:S}
    }\Else{
        $\complete(I_R,J_L)$\;\label{ln:RL}
        $\update(I_L,I_R,J_L)$\;\label{ln:LRL}
        $\update(I_R,J_L,J_R)$\;\label{ln:RLR}
        $\complete(I_L,J_L)$\;\label{ln:LL}
        $\complete(I_R,J_R)$\;\label{ln:RR}
        $\update(I_L,J_L,J_R)$\;\label{ln:LLR}
        $\update(I_L,I_R,J_R)$\;\label{ln:LRR}
        $\complete(I_L,J_R)$\;\label{ln:LR}
    }
    }\BlankLine

    \Fn{$\update(I,K,J)$}{
    \lForEach{$A\stackrel{s}{\to}BC$}{
        $V_A[I,J] = \min(V_A[I,J], s + V_B[I,K]\star V_C[K,J])$
    }
    }\BlankLine

    $\compute([0\dd n])$\;
\end{algorithm}

\subsection{Correctness}

The implementation of $\update(I,J,K)$ is correct because it directly follows the specification.
The following two lemmas prove the correctness of $\complete(I,J)$ and $\compute(I)$.

\begin{lemma}\label{lem:complete}
\cref{alg:valiant} provides a correct implementation of $\complete(I,J)$.
\end{lemma}
\begin{proof}
First, suppose that $I=\{i\}$ and $J=\{j\}$ are level-$0$ intervals.
In this case, $K=(\max I\dd \min J) = [i+1\dd j-1]$.
By \cref{obs:cyk}, we have 
\[M_A[i,j]=\begin{cases}
    0 & \text{if }A=X[i]\text{ and }j=i+1,\\  
    \min_{S\stackrel{s}{\to} B} (s + M_B[i,j]) &\text{if }A=S,\\
    \min_{A\stackrel{s}{\to}BC} (s+M_B[I,K]\star M_C[K,J]) & \text{if }A\in \S \sm (\Sigma\cup \{S\}),\\
    \infty & \text{otherwise}.  
\end{cases}\]
In the last two cases, the precondition already guarantees $V_A[i,j] = M_A[i,j]$.
If $A=X[i]$ and $j=i+1$, then the algorithm correctly sets $V_A[i,j] := 0$ in \cref{ln:zero}.
Finally, at \cref{ln:S}, we already have $V_B[i,j]=M_B[i,j]$ for $B\in \S\sm \{S\}$,
so $V_S[i,j]=M_S[i,j]$ is also set correctly.

Next, suppose that $I$ and $J$ and level-$\ell$ intervals such that $\ell>0$ and $I\prec J$.
\begin{itemize}
\item Prior to the execution of \cref{ln:RL}, we have $V_A[I_R,I_R]=M_A[I_R,I_R]$,
 $V_A[J_L,J_L]=M_A[J_L,J_L]$, and
$V_A[I_R,J_L] = \min_{A\stackrel{s}{\to}BC} (s + M_B[I_R,K]\star M_C[K,J_L])$ for all $A\in \S$.
Hence, the preconditions for $\complete(I_R,J_L)$ are satisfied and, by induction, $V_A[I_R,J_L] := M_A[I_R,J_L]$ is set for all $A\in \S$.
\item Prior to the execution of \cref{ln:LRL}, we have $V_A[I_L,I_R]=M_A[I_L,I_R]$, $V_A[I_R,J_L]=M_A[I_R,J_L]$,
 and $V_A[I_L,J_L] = \min_{A\stackrel{s}{\to}BC} (s + M_B[I_L,K]\star M_C[K,J_L])$ for all $A\in \S$.
Hence, after the execution of \cref{ln:LRL}, $V_A[I_L,J_L] = \min_{A\stackrel{s}{\to}BC} (s + M_B[I_L,I_R\cup K]\star M_C[I_R\cup K,J_L])$ holds for all $A\in \S$.
\item Prior to the execution of \cref{ln:RLR}, we have $V_A[I_R,J_L]=M_A[I_R,J_L]$, $V_A[J_L,J_R]=M_A[J_L,J_R]$,
and $V_A[I_R,J_R] = \min_{A\stackrel{s}{\to}BC} (s + M_B[I_R,K]\star M_C[K,J_R])$ for all $A\in \S$.
Hence,  after the execution of \cref{ln:RLR}, $V_A[I_R,J_R] = \min_{A\stackrel{s}{\to}BC} (s + M_B[I_R,K\cup J_L]\star M_C[K\cup J_L,J_R])$ holds for all $A\in \S$.
\item Prior to the execution of \cref{ln:LL}, we have $V_A[I_L,I_L]=M_A[I_L,I_L]$, $V_A[J_L,J_L]=M_A[J_L,J_L]$, and
$V_A[I_L,J_L] = \min_{A\stackrel{s}{\to}BC} (s + M_B[I_L,I_R\cup K]\star M_C[I_R\cup K,J_L])$ for all $A\in \S$.
Hence, the preconditions for $\complete(I_L,J_L)$ are satisfied and, by induction, $V_A[I_L,J_L] := M_A[I_L,J_L]$ is set for all $A\in \S$.
\item Prior to the execution of \cref{ln:RR}, we have $V_A[I_R,I_R]=M_A[I_R,I_R]$, $V_A[J_R,J_R]=M_A[J_R,J_R]$,~and
$V_A[I_R,J_R] = \min_{A\stackrel{s}{\to}BC} (s + M_B[I_R,K\cup J_L]\star M_C[K\cup J_L,J_R])$ for all $A\in \S$.
Hence, the preconditions for $\complete(I_R,J_R)$ are satisfied and, by induction, $V_A[I_R,J_R] := M_A[I_R,J_R]$ set for all $A\in \S$.
\item Prior to the execution of \cref{ln:LLR}, we have $V_A[I_L,J_L]=M_A[I_L,J_L]$, $V_A[J_L,J_R]=M_A[J_L,J_R]$, and
$V_A[I_L,J_R] = \min_{A\stackrel{s}{\to}BC} (s + M_B[I_L,K]\star M_C[K,J_R])$ for all $A\in \S$.
Hence, after the execution of \cref{ln:LLR}, $V_A[I_L,J_R] = \min_{A\stackrel{s}{\to}BC} (s + M_B[I_L,K\cup J_L]\star M_C[K\cup J_L,J_R])$ holds for all $A\in \S$.
\item Prior to the execution of \cref{ln:LRR}, we have $V_A[I_L,I_R]=M_A[I_L,I_R]$, $V_A[I_R,J_R]=M_A[I_R,J_R]$,
and $V_A[I_L,J_R] = \min_{A\stackrel{s}{\to}BC} (s + M_B[I_L,K\cup J_L]\star M_C[K\cup J_L,J_R])$ for all $A\in \S$.
Hence,  after the execution of \cref{ln:LRR}, $V_A[I_L,J_R] = \min_{A\stackrel{s}{\to}BC} (s + M_B[I_L,I_R\cup K\cup J_L]\star M_C[I_R\cup K\cup J_L,J_R])$ holds for all $A\in \S$.
\item Prior to the execution of \cref{ln:LR}, we have $V_A[I_L,I_L]=M_A[I_L,I_L]$, $V_A[J_R,J_R]=M_A[J_R,J_R]$,
and $V_A[I_L,J_R] = \min_{A\stackrel{s}{\to}BC} (s + M_B[I_L,I_R\cup K\cup J_L]\star M_C[I_R\cup K\cup J_L,J_R])$.
Hence, the preconditions for $\complete(I_L,J_R)$ are satisfied and, by induction, $V_A[I_L,J_R] := M_A[I_L,J_R]$ is set for all $A\in \S$.
\end{itemize}
We conclude that $\complete(I,J)$ correctly sets $V_A[I,J]=M_A[I,J]$ for all $A\in \S$.
\end{proof}

\begin{lemma}\label{lem:compute}
\cref{alg:valiant} provides is a correct implementation of $\compute(I)$.
\end{lemma}
\begin{proof}
First, suppose that $I=\{i\}$ is a level-$0$ interval.
By \cref{obs:cyk}, we have 
\[M_A[i,i]=\begin{cases}
    \min_{S\stackrel{s}{\to} \eps} s & \text{if }A=S,\\  
    \infty &\text{otherwise.}
\end{cases}\]
Hence, the algorithm correctly sets $V_A[i,i]=M_A[i,i]$.

Next, suppose that $I$ is a level-$\ell$ interval for $\ell > 0$.
By induction, we have $V_A[I_L,I_L]=M_A[I_L,I_L]$
and $V_A[I_R,I_R]=M_A[I_R,I_R]$ for all $A\in \S$ after the execution of \cref{ln:L,ln:R}, respectively.
Moreover, $V_A[I_R,I_L]=[\infty]^{I_R\times I_L}=M_A[I_R,I_L]$ is correctly set in \cref{ln:iRL} for all $A\in \S$.
Finally, setting $V_A[I_L,I_R]=[\infty]^{I_L\times I_R}$ in \cref{ln:iLR},
results in $V_A[I_L,I_R] = \min_{A\stackrel{s}{\to}BC} (s + M_B[I_L,\emptyset]\star M_C[\emptyset,I_R])$ for all $A\in \S$.
Due to $(\max I_L\dd \min I_R) = \emptyset$, this means the the preconditions for $\complete(I_L,I_R)$ are satisfied and, by \cref{lem:complete}, $V_A[I_L,I_R] = M_A[I_L,I_R]$ holds after the execution of \cref{ln:complete} for all $A\in \S$.
We conclude that $\compute(I)$ correctly sets $V_A[I,I]:=M_A[I,I]$ for all $A\in \S$.
\end{proof}

\subsection{Running Time}
Let us first analyze the number of calls to $\compute(I)$, $\complete(I,J)$, and $\update(I,K,J)$ with intervals of a given level $\ell\in [0\dd \log(n+1)]$.
Clearly, there are $\frac{n+1}{2^\ell}$ calls to $\compute(I)$, where $I$ is a level-$\ell$ interval,
and each such call performs $\Oh(g 4^\ell)$ operations, for a total of $\Oh(gn^2)$ across all levels.
Furthermore, there are $\frac12\cdot \frac{n+1}{2^\ell}\cdot (\frac{n+1}{2^\ell}-1)$ calls to $\complete(I,J)$, where $I,J$ are level-$\ell$ intervals, and each such call performs $\Oh(g)$ operations, for a total of $\Oh(gn^2)$ across all levels.
Finally, there are $2\cdot \frac{n+1}{2^{\ell+1}}\cdot (\frac{n+1}{2^{\ell+1}}-1)$ calls to $\update(I,K,J)$, where $I,K,J$ are level-$\ell$ intervals.
Using a naive implementation of the $(\min,+)$-product~$\star$, each such call performs $\Oh(g 8^\ell)$ operations,
for a total of $\Oh(gn^3)$ across all levels. This yields an alternative proof of \cref{fct:cyk}.

However, if the score of each production is $0$, then the entries of each matrix $V_A$ belong to $\{0,\infty\}$,
and the $(\min,+)$-product is equivalent to a Boolean product (with $0$ and $\infty$ interpreted as $\textbf{true}$
and $\textbf{false}$, respectively), so the multiplication of two $m\times m$ matrices costs $\Oh(m^\omega)$ time.
In this case, each call to $\update(I,K,J)$ takes $\Oh(g2^{\omega \ell})$ time, for a total of $\Oh(gn^\omega)$ across all levels, thus proving \cref{thm:valiant}.

\section{Generic Approximate Parser}
Consider the following modified implementation of $\update(I,K,J)$.

\begin{algorithm}[H]
    \caption{Our parser with parameters $\alpha\in \R_{\ge 1}$, $d\in \Rc$, and $\delta\in \Rz$. The procedures $\compute$ and $\complete$ are as in~\cref{alg:valiant} except that they operate on matrices $U$ instead~of~$V$.}\label{alg:generic}
    \SetKwProg{Fn}{Procedure}{:}{}
    \SetKwComment{Comment}{$\triangleright$\ \rm}{}
    
    \Fn{$\update(I,K,J)$}{
        \ForEach{$A\stackrel{s}{\to}BC$}{
            \ForEach{$(i,k)\in I\times K$}{
                \lIf{$U_B[i,k] > d$}{$U'_B[i,k]:=\infty$}
                \ElseIf{$I\cup K$ is a dyadic interval}{
                    Set $U'_B[i,k]$ so that $M_B[i,k]\le U'_B[i,k] \le \alpha (U_B[i,k]+\delta|I|)$\;\label{ln:mmult}
                }\Else{
                    Set $U'_B[i,k]$ so that $M_B[i,k]\le U'_B[i,k] \le U_B[i,k]+\delta |I|$\;\label{ln:madd}
                }
            }
            \BlankLine
            \ForEach{$(k,j)\in K\times J$}{
                \lIf{$U_C[k,j] > d$}{$U'_C[k,j]:=\infty$}
                \ElseIf{$K\cup J$ is a dyadic interval}{
                    Set $U'_C[k,j]$ so that $M_C[k,j]\le U'_C[k,j] \le  \alpha (U_C[k,j]+\delta |I|)$\;\label{ln:nmult}
                }\Else{
                    Set $U'_C[k,j]$ so that $M_C[k,j] \le U'_C[k,j]\le U_C[k,j]+\delta |I|$\;\label{ln:nadd}
                }
            }
            \BlankLine
            $U_A[I,J] := \min(U_A[I,J], s + U'_B[I,K]\star U'_C[K,J])$\;
        }
    }
\end{algorithm}

\newcommand{\V}{\overline{V}}

Next, we prove the approximation guarantees on the output of our parser.
For this, we consider simultaneously running \cref{alg:valiant,alg:generic} and compare the matrices $U_A$ maintained by our parser with the corresponding matrices $V_A$ maintained by Valiant's parser.
Our argument relies on a carefully designed inductive claim involving two potentials $\phi$ and $\psi$.
\begin{definition}
For an integer $t>0$, let $p_t$ be the multiplicity of $2$ in the prime factorization of $t$.
For every two integers $i,j\in \Zz$, we define
\[
    \phi(i,j) = \begin{cases} \max\limits_{t=i+1}^j p_t & \text{if } i < j,\\
                                0 & \text{otherwise},   \end{cases}\qquad\text{and}\qquad
    \psi(i,j) = \begin{cases}\left(\sum\limits_{t=i+1}^j 2^{p_t}\right) - 2^{\phi(i,j)} &\text{if } i < j,\\
        0 & \text{otherwise}.\end{cases}\]
\end{definition}

\begin{lemma}\label{lem:ind}
During a simultaneous execution of \cref{alg:valiant,alg:generic}, the following properties hold at all times for every $A\in \S$ and every $i,j\in [0\dd n]$:
\begin{enumerate}[label=(\alph*)]
    \item $U_A[i,j]\ge V_A[i,j]$,\label{it:lb}
    \item $\min(d,U_A[i,j])\le \alpha^{\phi(i,j)}(2\delta \psi(i,j) + V_A[i,j])$.\label{it:ub}
\end{enumerate}
\end{lemma}
\begin{proof}
First, we observe that \cref{alg:valiant,alg:generic} are oblivious to the matrices $V_A$ and $U_A$, respectively, meaning that the two algorithms share the same sequence of instructions reading and writing the entries of these matrices,
and it suffices to analyze each such instruction independently.

It is easy to see that instructions in \cref{ln:iinf,ln:iS,ln:iLR,ln:iRL,ln:zero} of \cref{alg:valiant}
result in the same values being set to the corresponding entries $U_A[i,j]$ and $V_A[i,j]$.
As for \cref{ln:S}, the values set at $U_S[i,j]$ and $V_S[i,j]$ 
are $\min_{S\stackrel{s}{\to} B} (s + U_B[i,j])$ and $\min_{S\stackrel{s}{\to} B} (s + V_B[i,j])$,
respectively.
By the inductive assumption and due to $s\ge 0$, the following holds for every production $S\stackrel{s}{\to} B$:
\begin{enumerate}[label=(\alph*)]
    \item $U_B[i,j]+s\ge V_B[i,j]+s$,
    \item $\min(d,U_B[i,j]+s)\le \min(d+s,U_B[i,j]+s) = \min(d,U_B[i,j]) +s \le \alpha^{\phi(i,j)}(2\delta \psi(i,j) + V_B[i,j])+s \le \alpha^{\phi(i,j)}(2\delta \psi(i,j) + V_B[i,j]+s)$.
 \end{enumerate}
Consequently, the claimed conditions~\ref{it:lb} and~\ref{it:ub} are satisfied for $U_S[i,j]$ and $V_S[i,j]$.

It remains to analyze the effects of the two implementations of $\update(I,K,J)$ in \cref{alg:valiant,alg:generic}.
Suppose that $I,K,J$ are level-$\ell$ intervals and that $K\cup J$ is a dyadic interval;
the argument is symmetric when $I\cup K$ is a dyadic interval.
Let us fix a production $A\stackrel{s}{\to} BC$ and indices $(i,j)\in I\times J$.
\cref{alg:valiant} sets $V_A[i,j] := \min(V_A[i,j], \min_{k\in K}(s+V_B[i,k]+V_C[k,j]))$,
while \cref{alg:generic} sets $U_A[i,j] := \min(U_A[i,j], \min_{k\in K}(s+U'_B[i,k]+U'_C[k,j]))$,
where $U'_B[i,k]$ and $U'_C[k,j]$ are obtained from $U_B[i,k]$ and $U_C[k,j]$ by appropriate rounding.

As for the lower bound, for every $i\in I$, $j\in J$, and $k\in K$,
we have $U'_B[i,k]+U'_C[k,j]\ge U_B[i,k] + U_C[k,j] = V_B[i,k]+V_C[k,j]$. 
Therefore,
$\min(U_A[i,j], s + \min_{k\in K}(U'_B[i,k]+U'_C[k,j])) \ge \min(V_A[i,j], s + \min_{k\in K}(V_B[i,k]+V_C[k,j]))$,
i.e.,~\ref{it:lb} is satisfied for the newly assigned values $U_A[i,j]$ and $V_A[i,j]$.

As for the upper bound, observe that every $k\in K$ satisfies $\min(d,U'_B[i,k]) \le U_B[i,k] + \delta 2^\ell$
and $\min(d,U'_C[k,j])\le \alpha(U_C[k,j]+\delta 2^\ell)$.
Moreover, $\phi(k,j)=p_{\min J} = \ell$
and $\phi(i,j)=\phi(i,k)\ge p_{\min K} \ge \ell+1$ because $K \prec J$ and $K\cup J$ is a dyadic interval.
Consequently,
\begin{multline*}\min(d,U'_B[i,k])\le U_B[i,k] + \delta2^\ell \le \alpha^{\phi(i,k)}\left(2\delta \psi(i,k) + V_B[i,k]\right) + \delta 2^\ell \le \\
\le \alpha^{\phi(i,j)}\left(\delta (2\psi(i,k)+2^{\ell}) + V_B[i,k]\right), \end{multline*}
and
\begin{multline*}\min(d,U'_C[k,j])\le \alpha (U_C[k,j] + \delta 2^\ell) \le \alpha^{1+\phi(k,j)}\left(2\delta \psi(j,k) + V_C[k,j] + \delta 2^\ell\right) \le \\
    \le \alpha^{\phi(i,j)}\left(\delta (2\psi(k,j)+2^{\ell}) + V_C[k,j]\right).\end{multline*}
Therefore, due to $\psi(i,j) = \psi(i,k)+2^{\phi(i,k)}+\psi(k,j)+2^{\phi(k,j)}-2^{\phi(i,j)}=
\psi(i,k)+\psi(k,j)+2^\ell$, we have
\begin{multline*}
    \min(d,s+U'_B[i,k]+U'_C[k,j])\le s + \min(d, U'_B[i,k])+\min(d,U'_C[k,j])\le \\ \le s+ \alpha^{\phi(i,j)}\left(\delta (2\psi(i,k)+2^{\ell}) + V_B[i,k]\right)+\alpha^{\phi(i,j)}\left(\delta (2\psi(k,j)+2^{\ell}) + V_C[k,j]\right) \le \\ \le  \alpha^{\phi(i,j)}\left(2\delta \psi(i,j)+s+V_B[i,k]+V_C[k,j]\right).
\end{multline*}
We conclude that 
\begin{multline*}\min\left(d, \min\left(U_A[i,j], \min_{k\in K}(s+U'_B[i,k]+U'_C[k,j])\right)\right) \le \\ \le \alpha^{\phi(i,j)}\left(2\delta \psi(i,j)+ \min\left(V_A[i,j], \min_{k\in K}(s+V_B[i,k]+V_C[k,j])\right)\right),\end{multline*}
i.e., that~\ref{it:ub} is also satisfied for the newly assigned values $U_A[i,j]$ and $V_A[i,j]$.
\end{proof}

\newcommand{\Mat}{\mathcal{M}}
\section{Multiplicative Approximation}
To get a multiplicative approximation, we instantiate \cref{alg:generic}
with the following code to obtain~$U'_B[I,K]$;
the matrix $U'_C[K,J]$ is obtained in an analogous way.

\begin{algorithm}[H]
    \ForEach{$(i,k)\in I\times K$}{
        \lIf{$U_B[i,k]>d$}{$U'_B[i,k]:=\infty$}
        \Else{
            $U'_B[i,k] := \delta |I| \cdot \ceil{\frac{U_B[i,k]}{\delta |I|}}$\;
            \If{$I\cup K$ is a dyadic interval}{
                $U'_B[i,k] := \alpha^{\ceil{\log _{\alpha} U'_B[i,k]}}$\;
            }
        }
    }\caption{Rounding $U'_B$ for multiplicative approximation.}
\end{algorithm}
In other words, we round the entries of $U_B[I,K]$ up to the nearest integer multiple of $\delta |I|$.
Moreover, if $I\cup K$ is dyadic, then we further round the entries of $U'_B[I,K]$ 
up to the nearest integer power of $\alpha$.
As a result, $U'_B$ has at most $1+\ceil{\log_{\alpha}(\ceil{\frac{d}{\delta |I|}})}$ distinct entries
if $I\cup K$ is dyadic, and at most $1+\ceil{\frac{d}{\delta |I|}}$ distinct entries otherwise.

Since we use symmetric code to obtain $U'_C$, the final task is to compute the $(\min,+)$ product 
of two matrices with at most $1+\ceil{\log_{\alpha}(\ceil{\frac{d}{\delta |I|}})}$ and $1+\ceil{\frac{d}{\delta |I|}}$ entries, respectively.

In the following, we develop a specialized procedure based on the fast multiplication of rectangular Boolean matrices, assuming that the product of an $x\times y$ matrix
with a $y\times z$ matrix can be computed in $\Mat(x,y,z)$ time.

First, we recall the classic equivalence between the Boolean product  $(\min,+)$ of matrices over $\{0,\infty\}$.

\begin{fact}\label{fct:bool}
Consider matrices $M\in \{0,\infty\}^{I\times K}$ and $N\in \{0,\infty\}^{K\times J}$.
The product $M\star N$ can be computed in $\Oh(\Mat(|I|,|K|,|J|))$ time.
\end{fact}

Next, we provide two algorithms for the case when one matrix is over $\{0,\infty\}$ and the other is over~$\Rc$.

\begin{lemma}\label{lem:inf-r}
Consider matrices $M\in \{0,\infty\}^{I\times K}$ and $N\in \Rc^{K\times J}$.
The product $M\star N$ can be computed in $\Oh(\Mat(|I|,|K|,|J|r))$ time, where $r$ is the number of distinct entries in $N$.
\end{lemma}
\begin{proof}
    Let $E=\{N[k,j] : (k,j)\in K\times J\}$ be the set of distinct entries in $N$.
    Let us define a matrix $N'\in \{0,\infty\}^{K\times (J\times E)}$ so that,
    for each $k\in K$, $j\in J$, and $e\in E$, we have 
    \[N'[k,(j,e)] = \begin{cases}
        0 & \text{if }N[k,j] = e,\\
        \infty & \text{otherwise}.
    \end{cases}\]
    Observe that the following holds for each $(i,j)\in I\times J$:
    \begin{multline*}(M\star N)[i,j] = \min_{k\in K}(M[i,k]+N[k,j]) = \min_{k\in K}\{N[k,j] : M[i,k]=0\} =\\
    \min_{k\in K, e\in E} \{e : M[i,k]=N'[k,(j,e)]=0\} = \min_{e\in E}\{e : (M\star N')[i,(j,e)]=0\}.\end{multline*}
    Consequently, the sought matrix $M\star N$ can be derived from $M\star N'$.
    The running time is $\Oh(|K||J|r + \Mat(|I|,|K|,|J|r)+|I||J|r)=\Oh(\Mat(|I|,|K|,|J|r))$,
    with the three terms representing the cost of constructing $N'$, $M\star N'$, and $M\star N$, respectively,
    and the product $M\star N'$ computed using \cref{fct:bool}.
\end{proof}

\begin{lemma}\label{lem:inf-any}
    Consider matrices $M\in \{0,\infty\}^{I\times K}$ and $N\in \Rc^{K\times J}$.
    For every integer $\tau\ge \log |K|$, the product $M\star N$ can be computed in time $\Oh(\Mat(|I|,|K|,|J|\tau)+|I| |K| |J|/\tau)$.
    \end{lemma}
    \begin{proof}
        For each column $j\in J$ of $N$, define a sorting permutation $\sigma_j : K\to [0\dd |K|)$
        such that $\sigma_j(k)\le \sigma_j(k')$ implies $N[k,j] \le N[k',j]$ for $k,k'\in K$.
        Now, let us construct a matrix $N'\in \{0,\infty\}^{K\times (J\times [0\dd \tau))}$ such that,
        for each $k\in K$, $j\in J$, and $t\in [0\dd \tau)$, we have:
        \[N'[k,(j,t)] = \begin{cases}
            0 &\text{if }\sigma_j(k)< (t+1)\tfrac{m}{\tau},\\
            \infty &\text{otherwise}.
        \end{cases}\]
        Let us fix $(i,j)\in I\times J$ such that $M[i,k]=0$ for some $k\in K$. Observe that
        \[(M\star N)[i,j] = N\left[\arg\min_{k\in K}\{\sigma_j(k) : M[i,k]=0\},j\right].\]
        Moreover, for every $t\in [0\dd \tau)$, we have 
        \[(M\star N')[i,(j,t)]=0 \quad\text{if and only if}\quad \min_{k\in K}\{\sigma_j(k) : M[i,k]=0\}<(t+1)\tfrac{m}{\tau}.\]
        Consequently, denoting $t_{i,j} = \min\{t\in [0\dd \tau) : (M\star N')[i,(j,t)]=0\}$,
        we obtain
        \[(M\star N)[i,j] = \begin{cases} \infty & \text{if }t_{i,j}=\infty,\\\min_{k\in K}\{N[k,j] : M[i,k]=0\text{ and }t_{i,j}\tfrac{m}{\tau}\le \sigma_j(k) < (t_{i,j}+1)\tfrac{m}{\tau}\} & \text{otherwise}.\end{cases}\]
        This allows computing $(M\star N)[i,j]$ by iterating over $k\in \sigma_j^{-1}\big(\big[\lceil t_{i,j}\tfrac{m}{\tau} \rceil\dd \lceil (t_{i,j}+1)\tfrac{m}{\tau}\rceil\big)\big)$.
        
        Overall, the cost of computing each permutation $\sigma_j$ and its inverse $\sigma^{-1}_j$ is $\Oh(|K|\log |K|)$. Based on these permutations, the matrix $N'$ is constructed in $\Oh(|K||J|(\log |K|+\tau))=\Oh(|K||J|\tau)$ time.
        Computing $M\star N'$ using \cref{fct:bool} takes $\Oh(\Mat(|I|,|K|,|J|\tau))$ time,
        and deriving $M\star N$ costs $\Oh(|I||J|\tau + |I||K||J|/\tau)$ time.
        The overall running time is $\Oh(\Mat(|I|,|K|,|J|\tau)+|I| |K| |J|/\tau)$.
    \end{proof}

Using \cref{lem:inf-r,lem:inf-any}, we derive our final algorithm for computing the $(\min,+)$ product.
\begin{proposition}\label{prp:minplus}
    Consider matrices $M\in \Rc^{I\times K}$ and $N\in \Rc^{K\times J}$.
    The product $M\star N$ can be computed in time $\Oh(\min(\Mat(|I|r_M,|K|,|J|r_N),\min_{\tau \ge \log |K|}(\Mat(|I|r_M,|K|,|J|\tau) + |I||K||J|r_M/\tau)))$,
    where $r_M$ and $r_N$ denote the number of distinct entries in $M$ and $N$, respectively.
\end{proposition}
\begin{proof}
    Let $E = \{M[i,k] : (i,k)\in I\times K\}$ be the set of distinct entries of $M$.
    Let us define a matrix $M'\in \{0,\infty\}^{(I\times E)\times K}$ so that,
    for each $i\in I$, $e\in E$, and $k\in K$, we have
    \[M'[(i,e),k]=\begin{cases}
        0&\text{if }M[i,j]=e,\\
        \infty & \text{otherwise};
    \end{cases}.\]
    Observe that the following holds for each $(i,j)\in I\times J$:
    \begin{multline*}(M\star N)[i,j] = \min_{k\in K}(M[i,k]+N[k,j]) = \min_{e\in E, k\in K}\{e+N[k,j] : M[i,k]=e\} =\\
        \min_{e\in E}\left\{(e+\min_{k\in K}\{N[k,j] : M[(i,e),k] = 0\}\right\}=\min_{e\in E}\{e + (M\star N')[i,(j,e)]\}.\end{multline*}
    Consequently, the sought matrix $M\star N$ can be derived from $M' \star N$.
    The running time is $\Oh(|I||K|r_M + \min(\Mat(|I|r_M,|K|,|J|r_N),\min_{\tau \ge \log |K|}(\Mat(|I|r_M,|K|,|J|\tau) + |I||K||J|r_M/\tau))+|I||J|r_M)$,
    with the three terms representing the cost of constructing $M'$, $M'\star N$, and $M\star N$, respectively,
    and the product $M\star N'$ computed using \cref{lem:inf-r} or~\ref{lem:inf-any}.
    The middle term dominates the time complexity.
\end{proof}

With an appropriate choice of parameters, our algorithm yields the following result:
\begin{theorem}\label{thm:internal}
There exists an algorithm that, given a CNF scored grammar $\G$ of size $g$, a string $X\in \Sigma^n$,
and two real parameters $d\in \Rz$ and $\eps\in (0,1)$,
takes $\Oh(\eps^{-2}gn^{2.491})$ time to compute a value $\tilde{s}_\G(X)\in \Rc$
such that:
\begin{itemize}
    \item if $s_\G(X)<\frac{d}{2}$, then $s_\G(X) < \frac{(1+\eps)d}{2}$;
    \item if $\frac{d}{2} \le s_\G(X) < d$, then  $s_\G(X) \le \tilde{s}_\G(X) \le (1+\eps)s_\G(X)$;
    \item if $s_\G(X)\ge d$, then $\tilde{s}_\G(X) \ge d$.
\end{itemize}
\end{theorem}
\begin{proof}
We run \cref{alg:generic} with $\delta = \frac{\eps d }{6n\log n}$ and $\alpha = 1+\frac{\eps}{6\log n}$,
and return $\tilde{s}_\G(X)=\min(d,U_S[0,n])$.
This way, $\alpha^{\phi(0,n)}\le \alpha^{\log n}\le \exp(\frac{\eps}{6})\le 1+\frac{6}{5}\cdot \frac{\eps}{6}=1+\frac{\eps}{5}$,
and $2\delta\psi(i,j) \le 2\delta n \log n \le \frac{\eps d}{3}$.

Consequently, \cref{lem:ind}, combined with the correctness of \cref{alg:valiant},
implies $\tilde{s}_\G(X) \le (1+\frac{\eps}{5})(\frac{\eps d}{3}+s_\G(X))$.
Now, if $s_\G(X)<\frac{d}{2}$,
then $\tilde{s}_\G(X)\le (1+\frac\eps5)(\frac{\eps d}3 + s_\G(X))
\le \frac{d}{2}(1+\frac\eps5)(1+\frac{2\eps}{3})\le \frac{(1+\eps)d}{2}$ holds as claimed.
Moreover, if $\frac{d}{2} \le s_\G(X) < d$, then
$s_\G(X) = \min(d, s_\G(X)) \le \min(d,U_S[0,n]) =  \tilde{s}_\G(X)\le (1+\frac\eps5)(\frac{\eps d}3 + s_\G(X)) \le s_\G(X)(1+\frac\eps5)(1+\frac{2\eps}{3}) \le (1+\eps)s_\G(X)$ holds as claimed.
Finally, if $s_\G(X) \ge  d$, then $ \tilde{s}_\G(X) = \min(d,U_S[0,n]) \ge \min(d,s_\G(X))=  d$ hold as claimed.

In each call to $\update(I,J,K)$ for level-$\ell$ intervals (of length $m:=2^\ell$),
the matrices have $\Oh(\frac{d}{\delta m})=\Oh(\frac{d n \log n}{\eps d m})=\Ohtilde(\frac{n}{\eps m})$
and $\Oh(\log_{\alpha} \frac{d n \log n}{\eps d m})=\Ohtilde(\log_\alpha 2) = \Ohtilde(\eps^{-1})$ distinct entries, respectively.
Hence, by \cref{prp:minplus}, the cost is \[\Ohtilde\left(g\min\left(\Mat(\tfrac{m}{\eps},m,\tfrac{n}{\eps}),\min_{\tau \ge 1}\left(\Mat(\tfrac{m}{\eps},m,\tau m)+\tfrac{m^3}{\eps\tau}\right)\right)\right).\]
The overall running time of our parser is therefore
\begin{multline*}
\Ohtilde\left(g\max_{1\le m \le n}\left(\tfrac{n^2}{m^2} \min\left(\Mat(\tfrac{m}{\eps},m,\tfrac{n}{\eps}),\min_{\tau\ge 1}\left(\Mat(\tfrac{m}{\eps},m,\tau m)+\tfrac{m^3}{\eps\tau}\right)\right)\right)\right)=\\
\Ohtilde\left(g\max_{1\le m \le n} \left(\tfrac{n^2}{m^2} \min\left(\Mat(\tfrac{m}{\eps},m,\tfrac{n}{\eps}),\tfrac{m^4}{n}\right)\right)\right)
=\Ohtilde\left(g\max_{1\le m \le n} \min\left(\tfrac{n^2}{m^2}\Mat(\tfrac{m}{\eps},m,\tfrac{n}{\eps}),nm^2\right)\right)=\\
\Ohtilde\left(\eps^{-2}g\max_{1\le m \le n}\min\left(\tfrac{n^2}{m^2}\Mat(m,m,n),nm^2\right)\right)=\Oh(\eps^{-2}gn^{2.491}).\qedhere\end{multline*}
\end{proof}

We derive \cref{thm:main} using binary search to approximate $s_\G(X)$.
Since the most costly production in the optimum derivation of $X$ has score between $\frac{1}{n}s_\G(X)$ and $s_\G(X)$, the algorithm of \cref{thm:internal} is applied $\Oh(\log(ng))=\Ohtilde(1)$ times.

\section{Additive Approximation for Bounded-Difference Grammars}

\begin{definition}
    We say that a scored grammar $\G$ is \emph{$W$-bounded-difference}
    if the following holds for every $A\in \S$,  $a\in \Sigma$ and $X\in \Sigma^+$:
    \[|s_A(X)-s_A(aX)|\le W\qquad \text{and}\qquad |s_A(X)-s_A(Xa)|\le W.\]
\end{definition}

To get an additive approximation, we instantiate \cref{alg:generic}
with the following steps:

\begin{algorithm}[H]
    $\nu := \frac{4W}{\delta |I|}$\;
    \ForEach{$(i,k)\in I\times K$}{
        $U''_B[i,k] := \min\{U_B[i',k']+W|i-i'|+W|k-k'| : (i',k')\in I\times K\}$\;
    }
    \BlankLine
    \ForEach{$(i,k)\in I\times K$}{
        $U'_B[i,k] := \max\{\frac{2W}{\nu}\ceil{\frac{\nu}{2W} U''_B[i',k']} : (i',k')\in I\times K\text{ such that }\ceil{\nu i'}=\ceil{\nu i}\text{ and }\ceil{\nu k'}=\ceil{\nu k}\}$\;
    }\caption{Rounding $U'_B$ for additive approximation.}
\end{algorithm}

In this case, we fix $\alpha=1$ and $d=\infty$.
As for the lower bound, note that 
$U'_B[i,k] \ge U''_B[i,k] \ge U_B[i',k'] + W|i-i'|+W|k-k'| \ge V_B[i',k'] + W|i-i'|+W|k-k'|
=U_B[i',k'] + W|i-i'|+W|k-k'| \ge U_B[i,k'] +W|k-k'| \ge U_B[i,k]$
holds by the assumption that $\G$ is $W$-bounded difference.
Moreover, we note that $U''_B[I,K]$ is a $W$-bounded difference matrix.

As for the upper bound,
note that $U'_B[i,k] \le \frac{2W}{\nu} \ceil{\frac{\nu}{2W}U''_B[i',k']} \le U''_B[i',k']+\frac{2W}{\nu}
\le U''_B[i,k] + W|i-i'|+W|k-k'| + \frac{2W}{\nu} \le U_B[i,k]+\frac{4W}{\nu} = U_B[i,k]+\delta |I|$.

The matrix $U'_C$ is constructed analogously and satisfies analogous bounds.

As far as the implementation of $\update(I,K,J)$ for intervals of length $2^\ell=m$ is concerned,
we have two options. One is to use a naive $\Oh(m^3)$-time algorithm.
The other is to observe that $U'_B$ and $U'_C$ can be collapsed into 
matrices of size $\Oh(\nu m)$ whose neighboring entries differ by $0$ or $\pm \frac{W}{\nu}$.
Thus, we can use an algorithm of \cite{BGSW2019}:
a randomized one in $\Oh((\nu m)^{\bar{\omega}})$,
where $\bar{\omega}<2.861$ for deterministic algorithms
and $\bar{\omega}<2.825$ if one allows randomization.

Setting $\delta = \frac{\eps n W}{2 \psi(n)}=\tilde{\Omega}(\eps W)$, we get an $\eps n W$-additive approximation.
The running time of a single $\update(I,K,J)$ call is
\[\Ohtilde\left(g\min\left(m^3, \epsilon^{-\bar{\omega}} \right)\right).\]
The overall running time of our parser is therefore
\[
\Ohtilde\left(g\max_{1\le m \le n}\left(\tfrac{n^2}{m^2} \min\left(m^3, \epsilon^{-\bar{\omega}} \right)\right)\right)=\Ohtilde(gn^2 \eps^{-\bar{\omega}/3}).\]
This is $\Ohtilde(gn^2 \eps^{-0.954})$ for deterministic algorithms
and $\Ohtilde(gn^2 \eps^{-0.942})$ for randomized algorithms.

\begin{remark}
If $W=\Oh(1)$ and all the costs are integers,
then we can use an $\Oh(m^{\bar{\omega}})$ time algorithm instead of the naive one;
this improves the overall running time to $\Ohtilde(gn^2 \eps^{2-\bar{\omega}})$,
which is $\Ohtilde(gn^2 \eps^{-0.861})$ for deterministic algorithms and 
$\Ohtilde(gn^2 \eps^{-0.825})$ if one allows randomization.
\end{remark}

\section{Lower Bound}
\label{section:lowerbound}
\subsection{Subcubic Reduction of APSP to Language Edit Distance}
In this section, we prove the following theorem.

\begin{theorem}
\label{thm:lowerbound}
Given a context-free grammar $G$, and a string $s \in \Sigma^*$, $|s|=n$, if the language edit distance problem with only insertion as allowable edit can be solved in $O(|G|n^{3-\delta})$ time then that implies an $\tilde{O}(m^{3-\delta/3}\log{W})$ algorithm for all-pairs shortest path problem on weighted digraphs with $m$ vertices and maximum edge weight $W$.
\end{theorem}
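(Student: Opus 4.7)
The plan is to reduce APSP on $m$-vertex weighted digraphs to insertion-only language edit distance by going through $(\min,+)$-matrix multiplication (MPP) as an intermediate, with a standard block-decomposition argument contributing the $\delta/3$ loss in the exponent. The key technical ingredient is a reduction from $(\min,+)$-multiplication of two $k\times k$ nonnegative-integer matrices $A,B$ with entries bounded by $W$ to one insertion-only language edit distance instance. The idea is to encode matrix entries as runs of a single padding terminal $\#$: for each intermediate index $\ell\in[k]$, grammar productions of the form $S_{ij}\to X_\ell Y_\ell$, $X_\ell\to \#^{A[i,\ell]}$, and $Y_\ell\to \#^{B[\ell,j]}$ force any derivation from the query start symbol $S_{ij}$ to emit exactly $A[i,\ell]+B[\ell,j]$ copies of $\#$ for some $\ell$. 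Choosing the input string $s$ to contain no $\#$, every $\#$ in a derivation must be inserted, so the insertion-only edit distance equals $\min_\ell(A[i,\ell]+B[\ell,j])=C[i,j]$. A unary-to-binary trick (replacing each $\#^w$ by a logarithmic-depth binary tree of nonterminals) keeps the grammar polynomial in $k$ and $\log W$ rather than in $W$. All $k^2$ pair queries are handled in a batched fashion, either by parallel start symbols $\{S_{ij}\}$ arranged in sequence along $s$ or by placing the sub-derivations at distinct substrings of $s$ so that the local-alignment variant from Section~\ref{section:upper} exposes every entry in one call of the hypothesized $O(|G|n^{3-\delta})$ algorithm.

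Given this core oracle, the rest follows from a standard block decomposition. Partition each $m\times m$ matrix required by an $m\times m$ MPP into $(m/k)^2$ blocks of size $k\times k$, and write $C_{IJ}=\min_K A_{IK}\otimes B_{KJ}$ over block indices $K$. One $m\times m$ MPP then decomposes into $(m/k)^3$ $k\times k$ block products, each computed by the oracle in time $\tilde O(k^{3-\delta})$, for a total of $\tilde O((m/k)^3\cdot k^{3-\delta})=\tilde O(m^3 k^{-\delta})$. APSP on $m$-vertex weighted digraphs with weights in $\{0,\ldots,W\}$ reduces to $O(\log m)$ such MPPs via repeated squaring of the distance matrix, contributing only an extra $O(\log m\cdot\log W)$ factor (the $\log W$ coming from a standard scaling argument that reduces general integer weights to $O(\log W)$ calls on bounded-weight instances). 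Balancing at $k=m^{1/3}$ yields the claimed bound $\tilde O(m^{3-\delta/3}\log W)$.

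The main obstacle is ensuring the core reduction is \emph{tight}, meaning $|G|\cdot n^{3-\delta}=\tilde O(k^{3-\delta})$ per MPP oracle call. This requires pushing as much of the $k$-dependence as possible into the string length $n$, so that the $n^{3-\delta}$ factor rather than a $|G|$ factor absorbs the savings, and it requires ruling out parasitic derivations: because only insertions are permitted, the grammar and string must be rigid enough that no shorter derivation ``cheats'' by exploiting $\epsilon$-productions, by sharing $\#$'s across different $\ell$-branches, or by inadvertently matching a $\#$ against a symbol of $s$. Convenient ways to enforce this are to give each start symbol $S_{ij}$ its own private copy of the intermediate nonterminals (blocking cross-pair interference) and to draw the query markers in $s$ from an alphabet disjoint from $\{\#\}$ (so that $\#$'s can never be matched, only inserted). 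Once these gadget details are pinned down, plugging into the block decomposition and the APSP-to-MPP reduction is routine.
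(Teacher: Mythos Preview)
Your proposal has a genuine gap in the core reduction. As you describe it, each start symbol $S_{ij}$ must, for every intermediate index $\ell\in[k]$, have a production $S_{ij}\to X_{i\ell}Y_{\ell j}$ (your $X_\ell,Y_\ell$ necessarily depend on $i$ and $j$ respectively, since they derive $\#^{A[i,\ell]}$ and $\#^{B[\ell,j]}$). That is already $k^3$ productions, so $|G|=\Theta(k^3)$, while the string $s$ carries essentially no index information---it is just a blank slate into which $\#$'s are inserted. Hence $|G|\cdot n^{3-\delta}=\Theta(k^3)$ independently of $\delta$, and your block decomposition gives total time $(m/k)^3\cdot\Theta(k^3)=\Theta(m^3)$: no saving at all. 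You correctly flag that ``pushing as much of the $k$-dependence as possible into the string length'' is the obstacle, but the unary-$\#$ encoding offers no mechanism for doing so; the string plays no role in selecting $i,j,\ell$.

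The paper's proof is structurally different and does \emph{not} use block decomposition. Following Lee's parsing lower bound, it splits each matrix index $i\in[m]$ into a ``grammar part'' $i_1=\lfloor i/d\rfloor\in[d^2]$ carried as a nonterminal subscript, and a ``string part'' $i_2=(i\bmod d)+2\in[d]$ carried as a \emph{position} in $s$, with $d=\lceil m^{1/3}\rceil$. One builds a single instance with nonterminals $A_{i_1,k_1},B_{k_1,j_1},C_{i_1,j_1}$ (only $O(d^4)=O(m^{4/3})$ of them), productions $C_{i_1,j_1}\to A_{i_1,k_1}B_{k_1,j_1}$ (only $O(d^6)=O(m^2)$ of them), and a fixed string of length $O(d)=O(m^{1/3})$ whose positions encode $i_2,k_2,j_2$. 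Each matrix entry $a_{i,k}$ is encoded by a production whose right-hand side anchors at positions $i_2$ and $k_2+\delta$ and forces insertion of $a_{i,k}$ extra symbols (via a binary gadget of size $O(\log W)$); extra $Y$-gadgets of uniform large cost $2M+1$ guarantee that the cheapest derivation of the substring $s_{i_2}^{j_2+2\delta}$ from $C_{i_1,j_1}$ uses matching $k_2$'s and pays exactly $\min_k(a_{i,k}+b_{k,j})$ plus a fixed offset. This yields $|G|=O(m^2\log W)$ and $n=O(m^{1/3})$, so a single oracle call costs $|G|\cdot n^{3-\delta}=O(m^{3-\delta/3}\log W)$, and subcubic equivalence of distance product with APSP finishes. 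The index-splitting idea---letting string \emph{positions}, not inserted symbols, encode part of each index---is precisely what your construction is missing.
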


We recall (redefine) some of the definitions first.

\noindent \textbf{Grammars \& Derivations.}
A context-free grammar (grammar for short) is a $4$-tuple $G=(\calN,\Sigma, \calP, S)$ where $\calN$ and $\Sigma$ are finite disjoint collection of nonterminals and terminals respectively. $\calP$ is the set of productions of the form $A \rightarrow \alpha$ where $A \in \calN$ and $\alpha \in (\calN \cup \Sigma)^*$. $S$ is a distinguished {\em start} symbol in $\calN$.

For two strings $\alpha, \beta \in (\calN\cup\Sigma)^{*}$, we say $\alpha$ directly derives $\beta$, written as $\alpha\Rightarrow \beta$, if one can write $\alpha=\alpha_1A\alpha_2$ and $\beta=\alpha_1\gamma\alpha_2$ such that $A \rightarrow \gamma \in \calP$. Thus, $\beta$ is a result of applying the production $A \rightarrow \gamma$ to $\alpha$.

$\calL(G)$ is the context-free language generated by grammar $G$, i.e., $\calL(G)=\{w \in \Sigma^* \mid S \xRightarrow{\ast}w\}$, where $\xRightarrow{\ast}$ implies that $w$ can be derived from $S$ using one or more production rules. If we always first expand the leftmost nonterminal during derivation, we have a leftmost derivation. Similarly, one can have a rightmost derivation. If $s \in \calL(G)$ it is always possible to have a leftmost (rightmost) derivation of $s$ from $S$.

We only consider grammars for which all the nonterminals are reachable, that is each of them is included in at least one derivation of a string in the language from $S$. Any unreachable nonterminal can be easily detected and removed decreasing the grammar size.

\noindent \textbf{Chomsky Normal Form (CNF).}
We consider the CNF representation of $G$. This implies every production is either of type (i) $A \rightarrow BC$,~ $A,B,C \in \calN$, or (ii) $A \rightarrow x,~ x \in \Sigma$ or (iii) $S \rightarrow \varepsilon$ if $\varepsilon \in \calL(G)$. It is well-known that every context-free grammar has a CNF representation. CNF representation is popularly used in many algorithms, including CYK and Earley's algorithm for CFG parsing \cite{Knuth:1997}. Prior works on cubic algorithms for language edit distance computation use CNF representation as well \cite{ap72,m95}

\begin{definition}[Language Edit Distance]
Given a grammar $G=(\calN,\Sigma, \calP, S)$ and $s \in \Sigma^*$, the language edit distance between $G$ and $s$ is defined as $d_{G}(G,s)=\min_{z: \in \calL(G)} \dist_{ed}(s,z)$ where $\dist_{ed}$ is the standard edit distance (insertion, deletion and substitution) between $s$ and $z$. If this minimum is attained by considering $z \in \calL(G)$, then $z$ serves as an witness for $\dist_{G}(G,s)$.
\end{definition}

We will often omit the subscript from $\dist_G$ and $\dist_{ed}$ and use $\dist$ to represent both language and string edit distance when that is clear from the context. We assume $\calL(G) \neq \phi$ and $\varepsilon \in \calL(G)$ so that $\dist_{G}(G,s)\leq |s|$.

\begin{definition}[$t$-approximation for Language Edit Distance]
Given a grammar $G=(\calN,\Sigma, \calP, S)$ and $s \in \Sigma^*$, a $t$-approximation algorithm for language edit distance problem, $t \geq 1$, returns a string $s'$ such that $s' \in \calL(G)$ and $\dist_{G}(G,s) \leq \dist_{ed}(s',s)\leq t* \dist_{G}(G,s)$.
\end{definition}

Note that to prove Theorem~\ref{thm:lowerbound} we only allow insertion as edits. Therefore, given a grammar $G$ and a string $s$, $|s|=n$, we compute $s' \in \calL(G)$ such that $s'$ can be obtained from $s$ by minimum number of insertions edits on $s$. If no such $s'$ exists in $\calL(G)$, then the language edit distance $d(s,G)$  is $\infty$.

We first define the output of a language edit distance algorithm rigorously. We use a notion of minimum consistent derivation. This is similar to the notion of consistent derivation used by Lee \cite{l97} to establish the lower bound for context free parsing. We need to additionally handle distance during parsing.

\begin{definition}
Given a context free grammar $G=(\mathcal{N},\Sigma, \calP, S)$, and a string $s \in \Sigma^*$. A nonterminal $A \in \calN$ mc-derives (minimally and consistently derives) $s_i^j$ if and only if the following condition holds:

1) If $A$ derives $s_{i}^{j}$ with a minimum score $l$, implying if $A(s)$ is the set of all strings that $A$ derives, $\min_{s'\in A(s)}\{dist_{ed}(s',s_{i}^{j})\}=l$, and

2. There is a derivation sequence $S \xRightarrow{\ast} s_{1}^{i-1}As_{j+1}^{n}$.
\end{definition}

\begin{definition}
A $\laned$~is an algorithm that takes a CFG $G=(\calN,\Sigma, \calP, S)$ and a string $s \in \Sigma^*$ as input and produces output $\calF_{G,s}$ that acts as an oracle about distance information as follows: for any $A \in \calN$
\begin{itemize}
\item If $A$ minimally and consistently derives $s_{i}^{j}$ with a minimum score $l$, then $\calF_{G,s}(A,i,j)=l$
\item $\calF_{G,s}$ answers queries in constant time.
\end{itemize}
\end{definition}

The above definition is weaker than the local alignment problem, because we are maintaining only those distances for substrings from which the full string can be derived. All known algorithms for parsing and language edit distance computation maintain this information, because not computing these intermediate results may lead to failure in parsing the full string, or parsing it with minimum number of edits.

The choice of an oracle instead of a particular data structure keeps open the possibility that time required for $\laned$ may be $o(n^2)$, which will not be the case if we keep a table like most known parsers. The third condition can be relaxed to take poly-logarithmic time in string and grammar size without much effect.


We reduce {\em distance product computation} over $(min,+)$-structure  to computing language edit distance with insertion. The subcubic equivalence between distance product computation and all-pairs shortest path \cite{williams2010} then establishes Theorem \ref{thm:lowerbound}. If we allow different edit costs for different terminals, then we can allow all three edits: insertion, deletion and substitution.

\paragraph*{Reduction}
We are given two weighted matrix $a$ and $b$ of dimension $m \times m$. We assume weights are all positive integers by scaling and shifting and $M=\max_{i,j}{\Big(a(i,j),b(i,j)\Big)}$. We produce a grammar $G$ and a string $s$ such that from $\calF_{G,s}$ one can deduce the matrix $c=a \cdot b$.

Let us take $d=\lceil m^{1/3} \rceil$, and we set $\delta=d+2$. Our universe of terminals is $\Sigma=\{s_1,s_2,...,s_{3d+6},x\}$. Our input string $s$ is of length $3\delta$ and is simply $s_1s_2....s_{d+2}s_{d+3}...s_{2d+4}s_{2d+5}....s_{3d+6}$.

Now consider a matrix index $i$, $1 \leq i \leq m \leq d^3$. Let
$$f_{1}(i)=\lfloor i/d \rfloor$$ and
$$f_{2}(i)=(i~ mod~ d)+2.$$

Hence $f_1(i) \in [1,d^2]$, and $f_2(i)\in [2,d+1]$. From $f_1(i)$ and $f_2(i)$, we can obtain $i$ uniquely. For notational simplicity we use $i_1$ to denote $f_1(i)$ and $i_2$ to denote $f_2(i)$. Note that if we decompose $s$ into three consecutive equal parts of size $d+2$ each, then $i_2$, $i_2+\delta$ and $i_2+2\delta$ belong to first, second and third halves respectively.

We now proceed to create the grammar $G=\{\calN, \Sigma, \calP, S\}$. Start from $\calN=\{S\}$ and $\calP=\phi$. 
 \begin{itemize}
  \item We create $\lfloor \log{(M+1)} \rfloor$ nonterminals as follows. Let $2^k \leq (M+1) < 2^{k+1}$, then create $X_{2^k}, X_{2^{k-1}},...,X_2,X_1$. We add the productions
$$
 X \longrightarrow x, ~~X_{2^i} \longrightarrow X_{2^{i-1}}X_{2^{i-1}}, 1\leq i \leq k~~ \text{(X-Rule)}.
$$
Let $\hat{w}=X_{2^{j_1}}X_{2^{j_2}}...X_{2^{j_l}}$ if $w=2^{j_1}+2^{j_2}+...+2^{j_l}$.
\item We also add for $1 \leq r \leq d$ the nonterminals $Y_1,Y_2,...,Y_{r}$. Let $2^l \leq r < 2^{l+1}$, then create nonterminals $Z_{2^l},Z_{2^{l-1}},..,Z_1$. Add $$
 Z_1 \longrightarrow \hat{2M+1}, ~~
 Z_{2^{i}} \longrightarrow Z_{2^{i-1}}Z_{2^{i-1}} 1\leq i \leq l ~~\text{(Z-Rule)}.
$$
If $r=2^{j_1}+2^{j_2}+...+2^{j_l}$ add
$$
Y_r \longrightarrow Z_{2^{j_1}}Z_{2^{j_2}}...Z_{2^{j_l}}~~\text{(Y-Rule)}.
$$
\item We now add nonterminal $W$ and productions to generate arbitrary non-empty substrings from $\Sigma \setminus \{x\}$.
$$
W \longrightarrow s_lW|s_l,~~l\in [1,3d+6]~~ \text{(W-Rule)}.
$$
\item We also add nonterminals $W_{i}^{j}$ that generates substring $s_is_{i+1}...s_{j}$.
$$
W_{i}^{j} \longrightarrow s_is_{i+1}...s_{j},~~i,j \in [1,3d+6]~~ \text{(New-W-Rule)}.
$$
\item  Next, we encode the entries of input matrix $A$ and $B$ in our grammar as follows. We add nonterminals from the sets $A_{p,q}, B_{p.q}: 1 \leq p,q,\leq d^2$ and $Y_{r}: 1 \leq r \leq d$. For each entry $a(i,j)=x$, $b(j,k)=y$, we add the production
$$ A_{i_1,j_1}\longrightarrow Y_{\delta-i_2}s_{i_2}W_{i_2+1}^{j_2+\delta-1}\hat{x}s_{j_2+\delta}Y_{j_2}~~  \text{(A-Rule)}, $$
$$B_{j_1,k_1}\longrightarrow Y_{\delta-j_2}s_{j_2+\delta+1}W_{j_2+\delta+2}^{k_2+2\delta-1}\hat{y}s_{k_2+2\delta}Y_{k_2}~~ \text{(B-Rule)}$$

\item We now add nonterminals to combine these consecutive substrings. Add $\{C_{p,q}:1 \leq p,q \leq d^2\}$ and add productions for all $r$, $ 1 \leq r \leq d^2$
$$
C_{p,q}\longrightarrow A_{p,r}B_{r,q}~~ \text{(C-Rule)}
$$
\item Finally, we add the production for the start symbol $S$ for all $p,q$, $1 \leq p,q \leq d^2$
$$
S \longrightarrow WC_{p,q}W~~\text{(S-Rule)}
$$
  \end{itemize}

The following crucial lemma suggests that by looking at $\calF_{G,s}(C_{i_1,j_1}, s_{i_2}^{j_2+2\delta})$ we can derive $c(i,j)$ where $i=(i_1,i_2)$ and $(j_1,j_2)$. This is precisely because $C_{i_1,j_1}$ must derive all the symbols of $s_{i_2}^{j_2}$ exactly once--a property ensured by adding $Y\dash Rule$s, and encodes $c(i,j)$ as the edit distance using $X\dash Rule$s.

\begin{lemma}
\label{lemma:lower1}
For $1 \leq i,j \leq m$, the entry $c_{i,j}=l$, if and only if $C_{i_1j_1}$ minimally and consistently derives $s_{i_2}^{j_2+2\delta}$ with score $l+(2M+1)(2\delta+j_2-i_2)$.
\end{lemma}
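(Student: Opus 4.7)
The plan is to prove the lemma by showing that the minimum insertion score of any derivation $C_{i_1,j_1}\xRightarrow{\ast} t$ of a string $t$ containing $s_{i_2}^{j_2+2\delta}$ as a subsequence is exactly $c_{i,j}+(2M+1)(2\delta+j_2-i_2)$; the ``iff'' in the lemma then follows immediately. First I would observe that any such derivation is forced through the $C$-rule $C_{i_1,j_1}\to A_{i_1,r_1}B_{r_1,j_1}$ and is then completely determined by five indices $(r_1,i'_2,r'_2,r''_2,j'_2)$: the $A$-production selects matrix positions $i'=(i_1,i'_2)$, $r'=(r_1,r'_2)$ together with entry $a(i',r')$, while the $B$-production selects $r''=(r_1,r''_2)$, $j'=(j_1,j'_2)$ and $b(r'',j')$. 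All subsequent expansions of $Y$, $Z$, $X$, $\hat{\cdot}$, and the $W_{\cdot}^{\cdot}$ nonterminals are forced, so $t$ is uniquely determined by these choices.

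Second I would tabulate the length of $t$ and its non-$x$ projection. A direct count gives $|t|=(2M+1)\Delta+a(i',r')+b(r'',j')+(\Delta+1)$, where $\Delta:=2\delta-i'_2+r'_2-r''_2+j'_2$, while the non-$x$ symbols of $t$ read left to right form $s_{i'_2},s_{i'_2+1},\ldots,s_{r'_2+\delta}$ followed by $s_{r''_2+\delta+1},\ldots,s_{j'_2+2\delta}$. Since insertions are the only edits, $t$ is at edit distance $|t|-(j_2+2\delta-i_2+1)$ from the target precisely when the target is a subsequence of $t$; because all $s_k$ are distinct, this reduces to checking that every integer in $[i_2,j_2+2\delta]$ appears in the non-$x$ index sequence, in the correct order. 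A short case analysis on whether the two non-$x$ runs overlap ($r'_2>r''_2$), abut ($r'_2=r''_2$), or leave a gap ($r'_2<r''_2$) shows that the gap case leaves a missing index range that must intersect the target, since $i_2,j_2\in[2,d+1]$ and $\delta=d+2$ pin the gap inside the middle third of $s$; feasibility thus forces $r'_2\geq r''_2$, $i'_2\leq i_2$, $j'_2\geq j_2$. Setting $\Delta_0:=2\delta+j_2-i_2$, these give $\Delta-\Delta_0=(i_2-i'_2)+(r'_2-r''_2)+(j'_2-j_2)\geq 0$, with equality iff $i'_2=i_2$, $j'_2=j_2$, $r'_2=r''_2$.

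Subtracting the target length $\Delta_0+1$ from $|t|$ yields insertion cost $(2M+2)\Delta+a(i',r')+b(r'',j')-\Delta_0$. In the tight case $\Delta=\Delta_0$ (forcing $i'=i$, $j'=j$, $r':=r'':=(r_1,r_2)$), this simplifies to $(2M+1)\Delta_0+a(i,r)+b(r,j)$, whose minimum over $r_1,r_2$ is exactly $c_{i,j}+(2M+1)\Delta_0$; such a derivation is realizable by picking any $r$ that attains the distance-product minimum, giving the ``if'' direction. In the slack case $\Delta-\Delta_0\geq 1$, the cost is at least $(2M+1)\Delta_0+(2M+2)$, which strictly exceeds $(2M+1)\Delta_0+c_{i,j}$ since $c_{i,j}\leq 2M$; so no derivation beats the tight-case minimum, establishing the ``only if'' direction. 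The consistency clause $S\xRightarrow{\ast} s_1^{i_2-1}C_{i_1,j_1}s_{j_2+2\delta+1}^{3\delta}$ is immediate from $S\to WC_{i_1,j_1}W$ together with $W\to s_\ell W\mid s_\ell$, since $i_2\geq 2$ and $j_2+2\delta\leq 3\delta-1$. The main obstacle is the subsequence-feasibility case analysis forcing $r'_2\geq r''_2$; the remainder is arithmetic bookkeeping powered by the crucial inequality $(2M+2)>2M\geq c_{i,j}$, which lets one extra $Y$-multiplier dominate any single matrix entry.
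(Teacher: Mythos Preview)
Your proposal is correct and follows essentially the same approach as the paper: both arguments enumerate all derivations from $C_{i_1,j_1}$ via the forced $C$-rule followed by a choice of $A$- and $B$-productions, compute the resulting insertion cost, argue that feasibility forces $i'_2\le i_2$, $j'_2\ge j_2$, $r'_2\ge r''_2$, and that the $(2M+1)$ weighting forces equality at the optimum, whence the minimum cost is $c_{i,j}+(2M+1)(2\delta+j_2-i_2)$. Your version is in fact more explicit and cleanly organized than the paper's (which contains a few typographical slips in the ``if'' direction); your introduction of $\Delta$ and the single inequality $(2M+2)(\Delta-\Delta_0)\ge 2M+2>c_{i,j}$ makes the domination argument transparent.
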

\begin{proof}
Fix $i,j$. We first prove the ``only-if'' part. So let $c_{i,j}=l$. Then there must exists a $k$ such that $a_{i,k}=z$ and $b_{k,j}=l-z$.

 We have the C-Rule $C_{i_1,j_1}=A_{i_1,k_1}B_{k_1,j_1}$. Since $a_{i,k}=z$, we have the $A-Rule$ $$A_{i_1,k_1}\longrightarrow Y_{\delta-i_2}s_{i_2}W_{i_2+1}^{k_2+\delta-1}\hat{z}s_{k_2+\delta}Y_{k_2}$$ and since $b_{k,j}=l-z$, we have the $B-Rule$ $$B_{k_1,j_1}\longrightarrow Y_{\delta-k_2}s_{k_2+\delta+1}W_{k_2+\delta+2}^{j_2+2\delta-1}\hat{l-z}s_{j_2+2\delta}Y_{j_2}.$$ Finally, since $i_2+1 < k_2+\delta-1$ and $k_2+\delta+2 \leq j_2+2\delta-1$, $W_{i_2+1}^{k_2+\delta-1} \xRightarrow{\ast}s_{i_2+1}^{k_2+\delta-1}$ and $W_{k_2+\delta+2}^{j_2+2\delta-1} \xRightarrow{\ast}s_{k_2+\delta+2}^{j_2+2\delta}$. All the $x$s generated from $\hat{z}$ and $\hat{(l-z)}$ act as deletion errors which need to be fixed by inserting elements in string $s$. Hence $A_{i_1,k_1}$ derives $s_{i_2}^{k_2+\delta}$ with score $z+(k_2+(\delta-i_2))(2M+1)|$ and $B_{k_1,j_1}$ derives $s_{k_2+\delta+1}^{j_2+2\delta}$ with score $l-z+((\delta-k_2)+j_2)(2M+1)$. Therefore, $C_{i_1,j_1}$ derives $s_{i_2}^{j_2+2\delta}$ with score at most $l+(2\delta+j_2-i_2)(2M+1)$.

Finally, $S \xRightarrow{\ast}s_{1}^{i_2-1}C_{i_1,j_1}s_{j_2+2\delta+1}^{3\delta+6}$ with score at most $l$, since $i_2-1 \geq 1$ and $j_2+2\delta+1< 3\delta+6$, hence $C_{i_1,j_1}$ minimally and consistently derives $s_{i_2}^{j_2+2\delta}$ with score at most $l+(2\delta+j_2-i_2)(2M+1)$.

Now, let us look at the ``if'' part and assume $C_{i_1,j_1}$ derives $s_{i_2}^{j_2+2\delta}$ minimally and consistently with a score $l'$. This can only arise through an application of C-Rule $C_{i_1,j_1}\longrightarrow A_{i_1,k'_1}B_{k'_1,j_1}$ such that $A_{i_1,k'_1}$ derives $s_{i'_2}^{k'_2+\delta}$ within edit distance (say) $z'$ and $B_{k'_1,j_1}$ derives $s_{k''_2+\delta+1}^{j'_2+2\delta}$ within edit distance $l'-z'$. Then, we must have the productions $A_{i_1,k'_1}\longrightarrow Y_{\delta-i'_2}s_{i'_2}W_{i'_2+1}^{k'_2+\delta-1}\hat{z'}s_{k'_2+\delta}Y_{k'_2}$ and $B_{k'_1,j_1}\longrightarrow Y_{\delta-k''_2}s_{k''_2+\delta+1}W_{k''_2+\delta+2}^{j'_2+2\delta-1}\hat{l'-z'}s_{j'_2+2\delta}Y_{j'_2}$. 

First, since we allow only deletion errors, it is not possible that $k'_2 < k''_2$, then edit distance will be $\infty$. Similarly, it is not possible that $i'_2 > i_2$ or $j'_2 < j_2$.

Therefore, the total edit cost paid is $\left[(\delta-i'_2)+k'_2+(\delta-k''_2)+j_2\right](2M+1)+\alpha$ for some $\alpha \leq 2M$. If $i'_2 > i_2$ or $j'_2 < j_2$, then the above cost is always higher than the case when $i'_2=i_2$ and $j'_2=j_2$. Hence we must use the productions $A_{i_1,k'_1}\longrightarrow Y_{\delta-i_2}s_{i_2}W_{i_2+1}^{k'_2+\delta-1}\hat{z'}s_{k'_2+\delta}Y_{k'_2}$ and $B_{k'_1,j_1}\longrightarrow Y_{\delta-k''_2}s_{k''_2+\delta+1}W_{k''_2+\delta+2}^{j_2+2\delta-1}\hat{l'-z'}s_{j_2+2\delta}Y_{j_2}$.

 If $k'_2 > k''_{2}$, then the total score is $\geq (\delta+1)(2M+1)$ due to the $Y$ rules, which is always higher than the case when $k'_2=k''_2$. Therefore, it must happen that $k'_2=k''_2$. But this can only happen, if there is a number $k'$ such that $f_1(k')=k'_1$ and $f_2(k')=k'_2$ and $a(i,k')=z'-[(\delta-i'_2)+k'_2](2M+1)$ and $b(k',j)=l'-z'-[(\delta-k'_2)+j'_2](M+1)$, and therefore $c(i,j)\leq l'-(2\delta+j'_2-i'_2)(2M+1)$.

The lemma now follows.
\end{proof}
\paragraph*{Grammar Size}
The total number of nonterminals used in this grammar is $O(d^4+d^2+\log{M})=O(m^{4/3}+\log{M})$ and the number of productions is $O(m^2+d^6+d^2+\log{M})=O(m^2+\log{M})$, where $d^6$ term comes from the $C\text{-}Rule$ and $m^2$ comes from considering all the entries of $A$ and $B$. If we consider the number of nonterminals involved in each production, then the total size of the grammar is $|G|=O(m^2\log{M})$.

{\bf Note.} The grammar constructed here is not in CNF form, but can easily be transformed into a CNF representation $G'$ where the number of productions in $G'$ increases at most by a factor of $\log{m}$. This happens because in $G$ there is no $\epsilon$ production or unit productions. For every terminal $s_j$ $j=1,2,..,3d+6$, we create a nonterminal, $S_j$ and replace their occurrences in productions with the newly created nonterminals. We add the productions $S_j \rightarrow s_j$ for $j=1,2,..,3d+6$. Finally, for every rule of the form $Q\rightarrow R_1R_2...R_s$, we create $s-1$ rules $Q \rightarrow R_1Q_1$, $Q_1 \rightarrow R_2Q_2$,..., $Q_{s-2} \rightarrow R_{s-1}R_{s}$. Since in $G$, the size of $RHS$ is any production can be at most $\lceil \log{M} \rceil+3$, we get the desired bound. Therefore, the claims in this section equally holds when parsers are restricted to work with CNF grammars.

\paragraph*{Time Bound}
\begin{lemma}
\label{lemma:dp}
Any language edit distance problem $P$ with mc-derivation having run time $O(T(|G|)t(n))$ on grammars of size $|G|$ and strings of
length $n$ can be converted into an algorithm MP to compute distance product on positive-integer weighted $m \times m$ matrix with highest weight $M$ that runs in time $O(max(m^2+T(m^2)t(m^{1/3})\log{M})$. In particular, if P takes $O(|G|n^{3-\epsilon})$ time then that implies an $O(m^{3-\epsilon/3}\log{M})$ running time for MP.
\end{lemma}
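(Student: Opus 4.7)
The plan is to apply the reduction from the preceding construction in a black-box fashion: given $a$ and $b$, build $G$ and $s$, invoke the hypothesised $\laned$ algorithm, and read off $c = a \cdot b$ from the oracle $\calF_{G,s}$ using Lemma \ref{lemma:lower1}. The bulk of the work is accounting; correctness has already been established.

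First I would pin down the resources of the reduction. The construction uses $O(m^{4/3} + \log M)$ nonterminals and $O(m^2 + \log M)$ productions, and each right-hand side has length at most $\lceil \log M \rceil + 3$ (owing to the binary encodings $\hat{x}, \hat{y}$ via the $X$-rules), so the total grammar size is $|G| = O(m^2 \log M)$. The string length is $n = 3\delta = O(m^{1/3})$. Writing $G$ and $s$ down explicitly takes $O(m^2 \log M)$ time: the $A$-Rule and $B$-Rule for each entry can each be emitted in $O(\log M)$ time by expanding the binary representation of that entry, and the remaining $C$-, $W$-, $X$-, $Y$-, $Z$-, $S$-rules take lower order time.

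Next I would run the hypothesised $\laned$ algorithm on $(G, s)$ at cost $T(|G|)\, t(n) = O(T(m^2 \log M)\, t(m^{1/3}))$, receiving the oracle $\calF_{G,s}$. For each pair $(i, j) \in [m] \times [m]$ I compute $i_1, i_2, j_1, j_2$ from the definitions $f_1, f_2$ in $O(1)$, query $\calF_{G,s}(C_{i_1, j_1}, i_2,\, j_2 + 2\delta)$, and subtract the fixed offset $(2\delta + j_2 - i_2)(2M+1)$ to recover $c_{i,j}$; if the oracle returns $\infty$ so does $c_{i,j}$. Correctness of every entry follows verbatim from Lemma \ref{lemma:lower1}. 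The $m^2$ queries cost $O(m^2)$ total, giving an overall running time of $O(\max(m^2,\, T(m^2 \log M)\, t(m^{1/3})))$, which matches the stated bound once the $\log M$ factor coming from the binary encoding is pulled outside $T$ (since $\log M$ enters $|G|$ only as a multiplicative factor).

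For the ``in particular'' clause, substituting $|G| = O(m^2 \log M)$ and $n = O(m^{1/3})$ into a runtime of $O(|G|\, n^{3 - \epsilon})$ yields $O\!\left(m^2 \log M \cdot m^{(3-\epsilon)/3}\right) = O\!\left(m^{3 - \epsilon/3} \log M\right)$, as claimed. The main subtlety I foresee is ensuring the reduction goes through even when the parser is restricted to CNF grammars: the Note following the construction already argues that conversion to CNF blows the grammar up by at most an $O(\log m)$ factor because there are no $\varepsilon$-productions or unit productions in $G$ and every right-hand side has length $O(\log M)$. This extra factor is absorbed into the $\tilde{O}$ and does not affect the exponent of $m$. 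Once CNF-conversion is handled, every remaining step is bookkeeping.
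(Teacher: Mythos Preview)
Your proposal is correct and follows essentially the same approach as the paper's proof: build $G$ and $s$ from $a,b$, run the hypothesised $\laned$ algorithm, and query $\calF_{G,s}(C_{i_1,j_1},i_2,j_2+2\delta)$ for every $(i,j)$, appealing to Lemma~\ref{lemma:lower1} for correctness and to the preceding grammar-size analysis and the CNF Note for the time bounds. You are actually slightly more careful than the paper in explicitly subtracting the additive offset $(2\delta+j_2-i_2)(2M+1)$ when recovering $c_{i,j}$ from the oracle's answer.
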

\begin{proof}
Given the two matrices $A$ and $B$ of dimension $m \times m$ with maximum weight (after shifting and scaling) $M$, time to read the entries is $O(m^2)$ and to create grammar $G$ is $O(|G|)=O(m^2+m^{4/3}\log{M})$ (note that $d=\lfloor m^{1/3} \rfloor$) and string $s$ is $O(m^{1/3})$. Assume, the parser takes time $O(T_1(G)+T_2(G)t(n))$ to create $\calF_{G,s}$. Then we query $\cal{F}_{G,s}$ for each $c_{i,j}$ by creating the query $(C_{i_1,j_1},s_{i_2}^{j_2+2\delta})$. If the answer is $K$, we set $c_{i,j}=K$. By Lemma \ref{lemma:lower1}, the computed value of $c_{i,j}=\min_{k}(a_{i,k}+b_{k,j})$ is correct. Hence once parsing has been done, creating $C$ again takes $O(m^2)$ time, assuming each query needs $O(1)$ time.

Suppose $T_1(G)=T_2(G)=|G|$ and $t(n)=n^{3-\epsilon}$, $ 0 < \epsilon \leq 1$then we get an algorithm to compute distance product in time $O(m^{3-\frac{\epsilon}{3}}\log{M})$.
\end{proof}

Now, due to sub-cubic equivalence of distance product computation with APSP, Theorem \ref{thm:lowerbound} follows.

\subsection*{Reducing APSP to Stochastic Context Free Parsing}
\sloppy
The reduction takes the following steps.
\begin{enumerate}
\item Reduce $(\min, \times)$-matrix product where matrix entries are drawn from $\mathbb{R}^{+}$ to stochastic context free grammar parsing, that is show if there exists am $O(|G|n^{3-\epsilon}\max_{p \in {\bf p}}{\log{\frac{1}{p}}})$ algorithm for stochastic context free grammar parsing, then there exists one with running time $\tilde{O}(n^{3-\beta}\log{W})$ for $(\min, \times)$-matrix product over $\mathbb{R}^{+}$, $\epsilon, \beta >0$ where $W$ is the maximum weight of any entry.
\item Next we show if there exists an algorithm with running time $\tilde{O}(n^{3-\beta}\log{W})$ for $(\min, \times )$-matrix product with entries in $(0,W]$, $\beta >0$, then there exists one with running time $\tilde{O}(n^{3-\beta}\log{W})$ for detecting negative weight triangle in a weighted graph with weights ranging in $[-W,W]$.
\item Finally, due to sub-cubic equivalence between minimum weight triangle detection with non-negative weights and APSP \cite{williams2010}, the result follows.
\end{enumerate}

\paragraph*{Reducing $(\min, \times)$-matrix product with entries in $\mathbb{R}^{+}$ to stochastic context free grammar parsing}

This reduction is similar to the previous one used for reducing language edit distance problem to distance product computation. Instead of encoding $a_{i,j}=w$, in the production rules $A\dash Rule$ and $B\dash Rule$, this is encoded in the probability of the corresponding productions. 

\begin{definition}
Given a stochastic context free grammar $\{G=(\calN,\Sigma, \calP, S), {\bf p}\}$, and a string $s \in \Sigma^*$. A nonterminal $A \in \calN$ c-derives (consistently derives) $s_i^j$ if and only if the following condition holds:

1. $A$ derives $s_{i}^{j}$

2. There is a derivation sequence $S \xRightarrow{\ast} s_{1}^{i-1}As_{j+1}^{n}$.
\end{definition}

\begin{definition}
A $Stochastic\dash Parsing$~is an algorithm that takes a SCFG $\{G=(\calN,\Sigma, \calP, S),{\bf p}\}$ and a string $s \in \Sigma^*$ as input and produces output $\calF_{\{G,{\bf p}\},s}$ that acts as an oracle about distance information as follows: for any $A \in \calN$
\begin{itemize}
\item If $A$ consistently derives $s_{i}^{j}$ with maximum probability $q$, then $\calF_{\{G,{\bf p}\},s}(A,i,j)=q$
\item $\calF_{\{G,{\bf p}\},s}$ answers queries in constant time.
\end{itemize}
\end{definition}

\paragraph*{Creating the Grammar.}

We are given two matrices $a$ and $b$ with entries from $\mathbb{R}^{+}$, and want to compute their $(\min, \times)$-product $c=a.b: c_{i,j}=\min_{1 \leq k \leq n}(a_{i,k}.b_{k,j}).$

\noindent {\bf Input string.} Let us take $d=\lceil m^{1/3} \rceil$, and we set $\delta=d+2$. Our universe of terminals is $\Sigma=\{s_1,s_2,...,s_{3d+6},x\}$. Our input string $s$ is of length $3\delta$ and is simply $s_1s_2....s_{d+2}s_{d+3}...s_{2d+4}s_{2d+5}....s_{3d+6}$.
\noindent {\bf Grammar construction.} Consider a matrix index $i$, $1 \leq i \leq m \leq d^3$. Let $f_{1}(i)=\lfloor i/d \rfloor$ and
$f_{2}(i)=(i~ mod~ d)+2$. Hence $f_1(i) \in [1,d^2]$, and $f_2(i)\in [2,d+1]$. We can uniquely obtain $i$ from $f_1(i)$ and $f_2(i)$. For notational simplicity we use $i_1$ to denote $f_1(i)$ and $i_2$ to denote $f_2(i)$. If we decompose $s$ into three consecutive equal parts of size $d+2$ each, then $i_2$, $i_2+\delta$,  and $i_2+2\delta$ belong to first, second and third halves respectively.
We now create the grammar $(G, {\bf p})=(\{\calN, \Sigma, \calP, S\}, {\bf p})$. Start from $\calN=\{S\}$ and $\calP=\phi$. 
 \begin{itemize}
  \item We add nonterminal $W$ and productions to generate arbitrary non-empty substrings from $\Sigma \setminus \{x\}$.
$
W \longrightarrow s_lW|s_l,~l\in [1,3d+6] \text{ each rule having probability } \frac{1}{2(3d+6)} ~~\text{(W-Rule)}.
$
The probabilities add up to $1$.
\item We encode the entries of $A$ and $B$ in our grammar. We add nonterminals $A_{p,q}: 1 \leq p,q,\leq d^2$, and $B_{p,q}: 1 \leq p,q \leq d^2$. Let $Count_A(i_1,j_1)=\sum_{i,j: x=f_1(i), y=f_1(j)}\frac{1}{a_{i,j}}$, and $Count_B(i_1,j_1)=\sum_{i,j: x=f_1(i), y=f_1(j)}\frac{1}{b_{i,j}}$.  Set
$MaxCount_A=\max_{x,y, 1 \leq x,y \leq d^2} Count_A(i_1,j_1)$, and 
$MaxCount_B=\max_{x,y, 1 \leq x,y \leq d^2}Count_B(i_1,j_1)$.
For each entry $a_{i,j}, b_{i,j}$, $i=(i_1,i_2)$, and $j=(j_1,j_2)$ 
add productions (A-Rule):
$
A_{i_1,j_1}\longrightarrow s_{i_2}Ws_{j_2+\delta} \text{ with prob. $\frac{1}{a_{i,j}MaxCount_A}$}, 
$
if $MaxCount_A > Count_A(i_1,j_1)$, then add a dummy rule
$
A_{i_1,j_1}\longrightarrow x \text{ with prob. $\frac{MaxCount_A-Count_A(i_1,j_1)}{MaxCount_A}$}
.$ Add productions (B-Rule) by replacing every ``A'' and ``a'' in (A-Rule) with ``B'' and ``b'' respectively.
\item We add nonterminals $\{C_{p,q}:1 \leq p,q \leq d^2\}$ and the productions for all $r$, $ 1 \leq r \leq d^2$
$
C_{p,q}\longrightarrow A_{p,r}B_{r,q} \text{ with prob. $\frac{1}{d^2}$}~~ \text{(C-Rule)}
$
\item Finally, we add the production for the start symbol $S$ for all $p,q$, $1 \leq p,q \leq d^2$
$
S \longrightarrow WC_{p,q}W \text{ with prob. $\frac{1}{d^4}$}~~ \text{(S-Rule)}
$
\end{itemize}
It can be verified that probabilities of all rules with same nonterminal on the LHS add up to $1$. Hence the constructed grammar is a SCFG. 
The following lemma suggests that by looking at $\calF_{G,s}(C_{i_1,j_1}, s_{i_2}^{j_2+2\delta})$ we can derive $c(i,j)$ where $i=(i_1,i_2)$ and $(j_1,j_2)$. Then noting that the grammar size is $O(m^2)$, and string length $O(m^{1/3})$, we get the desired subcubic equivalence between SCFG parsing and $(\min, \times)$ matrix product (Lemma \ref{lemma:dp1}). Note that this non-CNF grammar can easily be converted into a CNF representation with constant factor blow-up in size.

\begin{lemma}
\label{lemma:lower2}
For $1 \leq i,j \leq m$, the entry $c_{i,j}=l$, if and only if $C_{i_1j_1}$ $c$-derives $s_{i_2}^{j_2+2\delta}$ with probability $\frac{1}{ld^2.MaxCount_A.MaxCount_B[2(3d+6)]^{j_2+2\delta-i_2-2}}$\end{lemma}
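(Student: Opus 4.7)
The plan is to mirror the structure of the proof of Lemma~\ref{lemma:lower1}, but now tracking the product of rule probabilities along a derivation instead of a sum of edit scores. The key observation is that every c-derivation of $s_{i_2}^{j_2+2\delta}$ from $C_{i_1,j_1}$ is forced to have a very rigid shape: the only rule with $C_{i_1,j_1}$ on the LHS is the C-Rule $C_{i_1,j_1}\to A_{i_1,r_1}B_{r_1,j_1}$ for some $r$, and the dummy A- and B-rules produce the terminal $x$, which does not occur in $s$, so they cannot be used. Therefore the derivation must begin $C_{i_1,j_1}\Rightarrow A_{i_1,k_1}B_{k_1,j_1}$ for some $k_1\in[1,d^2]$ with probability $1/d^2$, followed by the main A-Rule $A_{i_1,k_1}\to s_{u}W s_{v+\delta}$ and main B-Rule $B_{k_1,j_1}\to s_{u'+\delta+1}W s_{v'+2\delta}$.

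For the forward direction, suppose $c_{i,j}=l$, and fix any $k$ attaining $a_{i,k}\cdot b_{k,j}=l$. I will construct the derivation
\[
C_{i_1,j_1}\Rightarrow A_{i_1,k_1}B_{k_1,j_1}\Rightarrow s_{i_2}Ws_{k_2+\delta}\cdot s_{k_2+\delta+1}Ws_{j_2+2\delta},
\]
and then expand the two $W$'s through the W-Rule into $s_{i_2+1}^{k_2+\delta-1}$ and $s_{k_2+\delta+2}^{j_2+2\delta-1}$ respectively. A string of length $L$ generated by $W$ has probability $[2(3d+6)]^{-L}$, so multiplying the C-, A-, B-, and two W-probabilities gives the stated value (with the exponent independent of $k$ because the total length of the two $W$-derivations is $j_2+2\delta-i_2-O(1)$). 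Consistency with $S$ follows from the S-Rule $S\to W C_{i_1,j_1} W$ together with $W$ generating the prefix $s_1^{i_2-1}$ and the suffix $s_{j_2+2\delta+1}^{3d+6}$.

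For the reverse direction, any c-derivation must, by the rigidity argument above, fix some intermediate index $k$ and then use the main A-Rule and B-Rule at $A_{i_1,k_1}$ and $B_{k_1,j_1}$. Because W derives exactly the required middle substrings and all the non-$k$-dependent probabilities (the $1/d^2$, the two $MaxCount$ factors, and the W-factor whose exponent depends only on $i_2,j_2,\delta$) are identical across choices of $k$, the overall probability equals $\bigl(a_{i,k}b_{k,j}\bigr)^{-1}$ times a $k$-independent constant. Maximising this probability is therefore equivalent to minimising $a_{i,k}b_{k,j}$, whose minimum over $k$ is precisely $c_{i,j}$. Setting this minimum equal to $l$ yields the claimed probability, which completes the equivalence.

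The main obstacle is simply bookkeeping: making sure that the W-rule exponent is counted correctly (the middle portions of the A- and B-rules contribute lengths $k_2+\delta-i_2-1$ and $j_2+\delta-k_2-2$, so the $k_2$ cancels and the total depends only on $i_2,j_2,\delta$), and that the dummy A/B-rules and the prefix/suffix W-derivations, while contributing probability mass, do not affect the comparison across different $k$'s. Once these $k$-independent factors are collected, the proof reduces to the identity that maximum of $1/(a_{i,k}b_{k,j})$ equals $1/c_{i,j}$.
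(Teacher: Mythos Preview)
Your proposal is correct and follows essentially the same approach as the paper: exhibit the explicit derivation through the C-, A-, B-, and W-Rules for the forward direction, argue rigidity of the derivation shape for the converse, and observe that all factors other than $\frac{1}{a_{i,k}b_{k,j}}$ are $k$-independent so that maximising the parse probability is equivalent to minimising the product. Your remark that the dummy A/B-rules are unusable because they produce the terminal $x\notin s$ is exactly the point that forces the rigidity, and your W-length bookkeeping (the two inner strings have lengths $k_2+\delta-i_2-1$ and $j_2+\delta-k_2-2$, so $k_2$ cancels) is the same telescoping the paper uses; the residual off-by-one against the stated exponent is a cosmetic arithmetic issue, not a gap in the argument.
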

\begin{proof}
The proof is similar to Lemma \ref{lemma:lower1} instead of computing the total edit distance, compute the total probability of the productions applied to parse $s_{i_2}^{j_2+2\delta}$. 

Fix $i,j$. We first prove the ``only-if'' part. So let $c_{i,j}=l$. Then there must exists a $k$ such that $a_{i,k}=z$ and $b_{k,j}=\frac{l}{z}$.

 We have the C-Rule $C_{i_1,j_1}=A_{i_1,k_1}B_{k_1,j_1}$ with probability $\frac{1}{d^2}$. Since $a_{i,k}=z$, we have the (A-Rule) $A_{i_1,k_1}\longrightarrow s_{i_2}Ws_{k_2+\delta}$ with probability $\frac{1}{zMaxCount_A}$and since $b_{k,j}=\frac{l}{z}$, we have the (B-Rule) $B_{k_1,j_1}\longrightarrow s_{k_2+\delta+1}Ws_{j_2+2\delta}$ with probability $\frac{z}{lMaxCount_B}$ . Finally, since $i_2+1 < k_2+\delta-1$ and $k_2+\delta+2 \leq j_2+2\delta-1$, $W \xRightarrow{\ast}s_{i_2+1}^{k_2+\delta-1}$ with probability $\frac{1}{[2(3d+6)]^{k_2+\delta-i_2-1}}$ and $W \xRightarrow{\ast}s_{k_2+\delta+2}^{j_2+2\delta}$ with probability $\frac{1}{[2(3d+6)]^{j_2+2\delta-k_2-\delta-1}}$. Hence  $C_{i_1,j_1}$ derives $s_{i_2}^{j_2+2\delta}$ with probability at least $\frac{1}{d^2}\frac{1}{zMaxCount_A}\frac{z}{lMaxCount_B}\frac{1}{[2(3d+6)]^{k_2+\delta-i_2-1+j_2+2\delta-k_2-\delta-1}}=\frac{1}{d^2}\frac{1}{l MaxCount_A MaxCount_B}\frac{1}{[2(3d+6)]^{j_2+2\delta-i_2-2}}$.

Finally, $S \xRightarrow{\ast}s_{1}^{i_2-1}C_{i_1,j_1}s_{j_2+2\delta+1}^{3\delta+6}$ with probability $\frac{1}{d^4}$, since $i_2-1 \geq 1$ and $j_2+2\delta+1< 3\delta+6$, hence $C_{i_1,j_1}$ consistently derives $s_{i_2}^{j_2+2\delta}$.

Now, let us look at the ``if'' part and assume $C_{i_1,j_1}$ derives $s_{i_2}^{j_2+2\delta}$  consistently with a probability $q$. This can only arise through an application of C-Rule $C_{i_1,j_1}\longrightarrow A_{i_1,k'_1}B_{k'_1,j_1}$ with probability $\frac{1}{d^2}$ such that $A_{i_1,k'_1}$ derives $s_{i_2}^{k_2+\delta}$ and $B_{k'_1,j_1}$ derives $s_{k_2+\delta+1}^{j_2+2\delta}$. Then, we must have the productions $A_{i_1,k'_1}\longrightarrow s_{i_2}Ws_{k_2+\delta}$ with probability $\frac{1}{zMaxCount_A}$ where $z=a(i,k), i=(i_1,i_2), k=(k'_1,k_2)$ and $B_{k'_1,j_1}\longrightarrow s_{k'_2+\delta+1}Ws_{j_2+2\delta}$ with probability $\frac{1}{z'MaxCount_B}$ where $z'=b(k,j), k=(k'_1,k_2), j=(j_1,j_2)$. Now considering the probabilities of W-Rules to generate $s_{i_2+1}^{k_2+\delta}$ and $s_{k'_2+\delta+2}^{j_2+2\delta-1}$, the ``if'' part is established.

The lemma now follows.
\end{proof}

\begin{lemma}
\label{lemma:dp1}
Any stochastic context free parsing problem $P$ with c-derivation having run time $O(T(|G|)t(n)\max{\log_{p \in \bf{p}}})$ on grammars of size $|G|$ and strings of
length $n$ can be converted into an algorithm MP to compute $(\min,\times)$-product of $m \times m$ matrices with entries in $\mathbb{R} \setminus \{0\}$ with highest weight $M$ that runs in time $\tilde{O}(max(m^2+T(m^2)t(m^{1/3})\log{M})$. In particular, if P takes $O(|G|n^{3-\epsilon}\max{\log_{p \in \bf{p}} \frac{1}{p}})$ time then that implies an $\tilde{O}(m^{3-\epsilon/3}\log{M})$ running time for MP.
\end{lemma}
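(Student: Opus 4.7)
\medskip

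\noindent \emph{Proof plan for Lemma \ref{lemma:dp1}.} My plan is to use the reduction constructed just above the statement (the one culminating in Lemma \ref{lemma:lower2}) as a black-box oracle reduction from $(\min,\times)$-matrix product to stochastic context free parsing with $c$-derivation information. Given $m \times m$ input matrices $a,b$ with entries in $\mathbb{R}^+$ whose maximum magnitude is $M$, first I would build the SCFG $(G,\mathbf{p})$ and the input string $s$ of length $3\delta = O(m^{1/3})$ exactly as described before Lemma \ref{lemma:lower2}, and observe that the grammar size is $|G|=O(m^2)$: the dominating contribution comes from the $A$-rules, $B$-rules, and $C$-rules, each of which has $O(d^4) = O(m^{4/3})$ or $O(d^6)=O(m^2)$ productions, while the $W$- and $S$-rules only contribute lower order terms. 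Constructing $(G,\mathbf{p},s)$ from the input takes $O(m^2)$ time and it is straightforward to check that the probability assignment is valid (the probabilities on the left-hand side of any nonterminal sum to one, by design including the dummy rule $A_{i_1,j_1}\to x$).

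Next, I would feed $(G,\mathbf{p},s)$ into the hypothesized $\mathsf{Stochastic\text{-}Parsing}$ algorithm $P$ to obtain the oracle $\calF_{\{G,\mathbf{p}\},s}$. For every index pair $(i,j)$, I would issue a single query $\calF_{\{G,\mathbf{p}\},s}(C_{i_1,j_1},\, i_2,\, j_2+2\delta)$ and invert the formula in Lemma \ref{lemma:lower2}: if the returned maximum probability equals $q_{ij}$, then
\[
c_{ij} \;=\; \frac{1}{q_{ij}\cdot d^2\cdot MaxCount_A\cdot MaxCount_B\cdot [2(3d+6)]^{\,j_2+2\delta-i_2-2}}.
\]
Since $d$, $MaxCount_A$, and $MaxCount_B$ are computed once during grammar construction and each query costs $O(1)$, the total post-processing time is $O(m^2)$, and correctness for every entry $c_{ij}$ follows directly from Lemma \ref{lemma:lower2}.

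For the running time, the reduction cost is $O(m^2)$ before and after the call to $P$, and the call itself runs in time $O(T(|G|)\cdot t(n)\cdot \max_{p\in\mathbf{p}}\log(1/p))$. Substituting $|G|=O(m^2)$ and $n=O(m^{1/3})$ gives the quoted bound $\tilde{O}(m^2+T(m^2)\, t(m^{1/3})\log M)$ once we show that $\max_{p\in\mathbf{p}}\log(1/p) = \tilde{O}(\log M)$. Plugging in $T(|G|)\,t(n)=|G|\, n^{3-\epsilon}$ yields $m^2\cdot m^{(3-\epsilon)/3}\cdot \tilde{O}(\log M) = \tilde{O}(m^{3-\epsilon/3}\log M)$, which is the special case in the lemma. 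The sub-cubic equivalence between $(\min,\times)$-product and APSP (invoked in the following subsection) then completes the chain.

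The main obstacle is controlling $\max_{p\in\mathbf{p}}\log(1/p)$. The $W$-rule, $C$-rule, and $S$-rule probabilities are all at least $1/\mathrm{poly}(m)$, so they contribute only $O(\log m)$ to $\log(1/p)$. The delicate case is the $A$- and $B$-rule probabilities of the form $\frac{1}{a_{i,j}\cdot MaxCount_A}$ (and similarly for $b$), which can become tiny when $a_{i,j}$ is large or when $MaxCount_A=\sum 1/a_{i,j}$ is large because some other entry is tiny. Under the standard bounded-ratio assumption that every entry lies in $[1/M,M]$ (which is equivalent to the hypothesis ``highest weight $M$'' after scaling, and is precisely the setting in which the APSP lower bounds are usually phrased), we have $MaxCount_A, MaxCount_B \le m^2 M$, so $\log(1/p) = O(\log m + \log M) = \tilde{O}(\log M)$, as required. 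Verifying this bound carefully, and ensuring the rational probabilities in $\mathbf{p}$ can be manipulated by $P$ within the stated overhead of $\log(1/p)$ per arithmetic operation, is the only technical point that needs care; everything else follows by routine plugging-in from Lemma \ref{lemma:lower2}.
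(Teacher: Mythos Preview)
Your proposal is correct and follows essentially the same approach as the paper: the paper's entire proof is the one-liner ``follows from Lemma \ref{lemma:lower2} and following similar steps as Lemma \ref{lemma:dp},'' and you have spelled out precisely those steps (build the SCFG and string, run $P$, query $\calF_{\{G,\mathbf{p}\},s}(C_{i_1,j_1},i_2,j_2+2\delta)$ for each $(i,j)$, invert via Lemma \ref{lemma:lower2}, and tally the time). Your treatment is in fact more careful than the paper's, since you explicitly isolate and bound $\max_{p\in\mathbf{p}}\log(1/p)$ under a bounded-ratio assumption on the entries---a point the paper leaves implicit.
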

\begin{proof}
The proof follows from Lemma \ref{lemma:lower2} and following similar steps as Lemma \ref{lemma:dp}.
\end{proof}

\paragraph*{From $(min,\times)$-matrix product with positive real entries to Negative Triangle Detection.}

we show if there exists an algorithm with running time $\tilde{O}(n^{3-\beta}\log{W})$ for $(\min, \times )$-matrix product with entries in $(0,W]$, $\beta >0$, then there exists one with running time $\tilde{O}(n^{3-\beta}\log{W})$ for detecting negative weight triangle in a weighted graph with weights ranging in $[-W,W]$.

We now show that if there exists an algorithm with running time $\tilde{O}(n^{3-\beta}\log{W})$ for $(\min, \times )$-matrix product over $\mathbb{R}^{+}$ with weights in $(0,W)$, $\beta >0$, then there exists one with running time $\tilde{O}(n^{3-\beta}\log{W})$ to detect if a weighted graph $G=(V,E)$ has a triangle of negative total edge weight where weights are in $[-W,W]$.

We assume all edge weights are integers, and the maximum absolute weight $W$ is at least 3. Both of these can be achieved by appropriately scaling the edge weights.

Let $W$ be the maximum absolute weight on any edge $e \in E$, $|W| \geq 3$. Set $A(i,j)=w_{i,j}+W^3$, and $B(i,j)=A(i,j)$. Therefore, all entries of $A$ and $B$ are $> 0$. Find the $(\min, \times)$ product of $C=A\bigodot_{\min,\times}B$. Let $C'(i,j)=C(i,j)-W^6+W^3w_{i,j}+2W^2$.

If there exists a negative triangle $i\rightarrow k\rightarrow j$, then $w_{i,k}+w_{k,j}+w_{i,j} \leq -1$. Hence $W^3(w_{i,k}+w_{k,j}+w_{i,j}) \leq -W^3$ or, $ W^3(w_{i,k}+w_{k,j}+w_{i,j})+2W^2\leq -W^3+2W^2\leq -W^2$. Now $(w_{i,k}+W^3)(w_{k,j}+W^3)=w_{i,k}w_{k,j}+W^3(w_{i,k}+w_{k,j})+W^6$ Hence, $C'(i,j)=\min_{k}( w_{i,k}w_{k,j}+W^3(w_{i,k}+w_{k,j}+w_{i,j})+2W^2$. Now $-W^2\leq w_{i,k}w_{k,j} \leq W^2$. Therefore, if there is a negative triangle involving edge $(i,j)$, then
$C'(i,j) \leq W^2+\min_{k}(W^3(w_{i,k}+w_{k,j}+w_{i,j})+2W^2\leq 0$

On the other hand, if there is no negative triangles, then $C'(i,j)\geq -W^2+2W^2=W^2 \geq 9$ for all $1 \leq i,j \leq n$.

Therefore, there exists a negative triangle in $G$ if and only if there is a negative entry in $C'$. While $C$ can be computed in asymptotically same time as computing $(\min, \times)$-matrix product of two $n \times n$ dimensional matrices with real positive entries, $C'$ can be computed from $C$ in $O(n^2)$ time.

Hence, we get the following lemma.

\begin{lemma}
Given two $n \times n$ matrices with positive real entries with maximum weight $W$ if their $(\min, \times)$-matrix product can be done in time $O(T(n)\log{W})$ time, then $\tilde{O}([T(n)+n^2]\log{W})$ time is sufficient to detect negative triangles on weighted graphs with $n$ vertices and weights in $[-W,W]$.
\end{lemma}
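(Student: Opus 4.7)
The plan is to reduce negative triangle detection on a weighted graph $G=(V,E)$ with integer weights in $[-W,W]$ to a single $(\min,\times)$-product of two $n\times n$ matrices with strictly positive entries, plus $O(n^2)$ postprocessing. The key idea is to shift every weight by a large constant $W^3$ to enforce positivity, and then to exploit the identity $(w_{i,k}+W^3)(w_{k,j}+W^3) = w_{i,k}w_{k,j} + W^3(w_{i,k}+w_{k,j}) + W^6$, whose middle term is (after dividing by $W^3$) essentially the sum $w_{i,k}+w_{k,j}$ we are trying to recover, while the cross term $w_{i,k}w_{k,j}$ is only $O(W^2)$ and is therefore dominated by the $W^3$ scaling.

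Concretely, I would first assume, by rescaling, that the weights are integers and $W\geq 3$. Define $A(i,j)=B(i,j)=w_{i,j}+W^3 > 0$, compute $C = A \odot_{\min,\times} B$ in time $O(T(n)\log W)$ by the hypothesis, and then form $C'(i,j) = C(i,j) - W^6 + W^3 w_{i,j} + 2W^2$ in $O(n^2)$ time. The term $-W^6$ cancels the constant produced by shifting both factors, the term $W^3 w_{i,j}$ injects the third edge of a potential triangle at the same scale as the inner product, and the additive $+2W^2$ is a carefully chosen threshold to separate the two cases.

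The core verification step is to check that $C'(i,j)\leq 0$ iff some triangle $i\to k\to j$ has total weight $\leq -1$. Expanding, $C'(i,j) = \min_k \bigl\{ w_{i,k}w_{k,j} + W^3(w_{i,k}+w_{k,j}+w_{i,j}) \bigr\} + 2W^2$. The cross term satisfies $|w_{i,k}w_{k,j}| \leq W^2$. If a negative triangle passes through edge $(i,j)$, there is some $k$ with $w_{i,k}+w_{k,j}+w_{i,j}\leq -1$, so $C'(i,j) \leq W^2 - W^3 + 2W^2 \leq 0$ since $W\geq 3$. Otherwise the triple sum is nonnegative for every $k$, so $C'(i,j) \geq -W^2 + 2W^2 = W^2 > 0$.

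The main (essentially only) subtlety is calibrating the threshold so that the spurious cross term $w_{i,k}w_{k,j}$, which ranges over $[-W^2,W^2]$, cannot mask the $W^3$-scaled sign of the triangle weight; the choice of $+2W^2$ together with the assumption $W\geq 3$ precisely achieves this. Scanning $C'$ for a nonpositive entry and reporting accordingly takes an additional $O(n^2)$ time, yielding the claimed $\tilde{O}((T(n)+n^2)\log W)$ bound.
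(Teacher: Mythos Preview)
Your proposal is correct and follows essentially the same argument as the paper: the same shift by $W^3$, the same definition of $C'(i,j)=C(i,j)-W^6+W^3w_{i,j}+2W^2$, and the same threshold analysis using $|w_{i,k}w_{k,j}|\le W^2$ against the $W^3$-scaled triangle sum with $W\ge 3$.
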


Now, due to subcubic equivalence between negative triangle detection and APSP, we get the following theorem.

\begin{theorem}\label{theorem:lowerb-scfg}
Given a stochastic context-free grammar $\{G=(\calN,\Sigma, \calP, S),{\bf p}\}$, and a string $s \in \Sigma^*$, $|s|=n$, if the SCFG parsing problem can be solved in $O(|G|n^{3-\delta}\max{\log_{p \in \bf{p}} \frac{1}{p}})$ time then that implies an algorithm with running time $\tilde{O}(m^{3-\delta/3}\log{W})$ for all-pairs shortest path problem on weighted digraphs with $m$ vertices and maximum weight $W$.
\end{theorem}

This leads to the following corollary by sub-cubic equivalence of all-pairs shortest path with many other fundamental problems on graphs and matrices \cite{williams2010,abboud:14}.
\begin{corollary}
\label{cor:lb}
Given a stochastic context-free grammar $\{G=(\calN,\Sigma, \calP, S),{\bf p}\}$, and a string $s \in \Sigma^*$, $|s|=n$, if the SCFG parsing problem can be solved in $O(|G|n^{3-\delta}\max{\log_{p \in \bf{p}} \frac{1}{p}})$ time then that implies an algorithm with running time $\tilde{O}(m^{3-\delta/3})$, $\gamma, \delta >0$ for all of the following problems.
\begin{enumerate}
        \item Minimum weight triangle: Given an $n$-node graph with real edge weights, compute $u,v,w$ such that $(u,v),(v,w),(w,u)$ are edges and the sum of edge weights is minimized.
        \item Negative weight triangle: Given an $n$-node graph with real edge weights, compute $u,v,w$ such that $(u,v),(v,w),(w,u)$ are edges and the sum of edge weights is negative.
        \item Metricity: Determine whether an $n \times n$ matrix over $\mathbb{R}$ defines a metric space on $n$ points.
        \item Minimum cycle: Given an $n$-node graph with real positive edge weights, find a cycle of minimum total edge weight.
        \item Second shortest paths: Given an $n$-node directed graph with real positive edge weights and two nodes $s$ and $t$, determine the second shortest simple path from $s$ to $t$.
        \item Replacement paths: Given an $n$-node directed graph with real positive edge weights and a shortest path $P$ from node $s$ to node $t$, determine for each edge $e \in P$ the shortest path from $s$ to $t$ in the graph with $e$ removed.
        \item Radius problem: Given an $n$-node weighted graph with real positive edge weights, determine the minimum distance $r$ such that there is a vertex $v$ with all other vertices within distance $r$ from $v$.
\end{enumerate}
\end{corollary}

\subsection*{Reducing APSP to Weighted Language Edit Distance Problem}
In the weighted language edit distance problem, we are given a context free language $\calL(G)=(\calN, \Sigma, \calP, S)$ and a string $s \in \Sigma^*$ along with a scoring function $score: \Sigma \times \{Insertion, Deletion, Substitution\} \rightarrow \mathbb{R}^{+}$, the goal is to do minimum total weighted edits on $s$ according to the scoring function to map it to $\calL(G)$.

To reduce APSP to weighted language edit distance problem, we use the same construction used to prove Theorem \ref{thm:lowerbound}, and in addition we define a scoring function. For insertion edits, all terminals in $\Sigma$ get a score of $1$. However, for deletion and substitution, we set for every $x \in \Sigma$, $score(x,deletion)=score(x,substitution)=(3d+6)(M+1)$. Deletion or substitution of any element in the input string $s$ is too costly. An optimum algorithm for the weighted language edit distance will therefore never do deletion or substitution. The entire analysis of Theorem \ref{thm:lowerbound} now applies.


\end{document}